\newtheorem{proof}{Proof}{}
{}
\newtheorem{remark}{Remark}{}
\newtheorem{lemma}{Lemma}
\title{{Low-Complexity Angle-Domain MIMO NOMA System with partial channel state information for MmWave Communications}\thanks{This work has been funded by  IMT Atlantique, Lebanese University research fund program and the AZM association.}}
\author[1,2]{Israa Khaled}
\author[1]{Charlotte Langlais}
\author[2]{Ammar El Falou}
\author[2]{Bachar ElHassan}
\author[1]{Michel Jezequel}
\affil[1]{Electronics Department, Institut Mines-Telecom, TelecomBretagne, CNRS UMR 6285 Lab-STICC, \\ CS83818 - 29238 Brest Cedex 3, France}
\affil[2]{ Lebanese University, Faculty of Engineering, Tripoli, Lebanon}
\date{}
\begin{document}
	\maketitle
\begin{abstract}
In millimeter-wave communication, digital beamsteering (DBS), only based on the user direction, is a promising angle-domain multi-antenna technique to mitigate the severe path loss and multi-user interference, with low-complexity and partial channel state information (CSI). In this paper, we design a power-domain non-orthogonal multiple access (NOMA) scheme that enhances the DBS performance trading-off complexity, energy-consumption and capacity performance. In particular, we propose a user-clustering algorithm to pair users, based on a geometric-interference metric, so that  the inter-user interference is reduced. Afterward, based on a fixed inter-cluster power allocation, we derive analytically a sub-optimal intra-cluster power allocation optimization problem to maximize the network throughput. To address the issue of partial CSI, we rewrite the aforementioned optimization problem, by relying only on the user direction. Performance evaluation of the proposed schemes is developed in rural environment, based on the New York University millimeter-wave simulator. The obtained results demonstrate that the proposed low-complexity NOMA-DBS schemes with either full- or partial-CSI achieve significant performance improvement over the classical DBS, in terms of spectral- and energy-efficiencies (up to $26.8\%$ bps/Hz rate gain for 45 users using the proposed scheme with partial CSI).
\end{abstract}

\section{Introduction}
\label{sec:intro}	 
To meet the exponential growth of data traffic, the new generation of cellular systems needs to achieve higher network capacity and better energy efficiency. This is done by exploiting new and innovative technologies such as massive multiple-input multiple-output (m-MIMO) system, non-orthogonal multiple access (NOMA) technique and millimeter-wave (mmWave) bands \cite{busari2017millimeter,araujo2016massive,dai2015non}. These technologies can be integrated together to further enhance the network capacity \cite{zhang2017capacity}. Indeed, the m-MIMO beamforming (BF), which enables high directional array, overcomes the tremendous path loss in mmWave wireless communications where enormous bandwidths are available. In addition, using superposition coding (SC) at the base station (BS) and successive interference cancellation (SIC) at the receiver side, power-domain NOMA (PD-NOMA) effectively improves the connectivity density by multiplexing user equipments (UEs) into transmission power domain \cite{dai2015non}.

Recently, numerous research studies on NOMA-based multiuser MIMO (NOMA-MIMO) system have been initiated to enhance the spectrum efficiency. For instance, system-level simulations are presented in \cite{benjebbovu2013system} and show a clear superiority of NOMA over orthogonal multiple access (OMA) in terms of system throughput. Moreover, the authors in \cite{zeng2017capacity} analytically compares the performance of NOMA-MIMO with OMA-MIMO in terms of the sum-channel- and ergodic-sum-capacities, when multiple users exist in a cluster. The analytical proof also indicates the superiority of NOMA-MIMO. 

The most important challenges of NOMA-MIMO system include overall system overhead, user clustering, power allocation and beamforming techniques. For multicast NOMA-MIMO, an iterative algorithm is proposed in \cite{choi2015minimum} to find the BF vector and the power for each UE that solves the power minimization problem. In \cite{xiao2018joint}, a joint beamforming and power allocation scheme was proposed to maximize the sum-rate of a 2-user mmWave NOMA-MIMO system using an analog BF structure with a phased array. A spectrum and energy efficient mmWave transmission scheme that integrates NOMA with beamspace MIMO was first proposed in \cite{wang2017spectrum}, in order to reduce the hardware complexity and energy consumption by using lens antennas. In \cite{wang2017spectrum}, the authors designed a precoding scheme based on the principle of zero forcing  (ZF) to reduce the inter-cluster interferences. 

However, \cite{choi2015minimum,wang2017spectrum,xiao2018joint, zeng2017capacity} require full channel state information (CSI), which is difficult to perform at BS and brings high feedback overhead in case of a large array system. To address this issue, the authors in \cite{ding2017random} implemented a random beamforming for a 2-user NOMA-MIMO cluster to avoid the requirement of full CSI at BS. The authors exploit the key feature of mmWave systems, i.e., the highly directional transmission. Thereby, they proposed a low-feedback mmWave NOMA-MIMO schemes with a fixed power allocation where the two UEs in the NOMA cluster located in the same sector are classified based on their distances from BS instead of their effective channel gains. In \cite{sun2015ergodic}, the power allocation for a Rayleigh fading 2-user NOMA-MIMO system without beamforming was investigated as an ergodic capacity maximization under statistical CSI. A m-MIMO NOMA system with limited feedback is designed in \cite{ding2016design}, where the m-MIMO NOMA channel is splited into multiple single input single output (SISO) NOMA channels, by exploiting the spatial correlation matrices of users' channels.

To reduce the BF implementation complexity and the channel overhead in m-MIMO systems, the authors in \cite{Roze2015} have been investigated the digital beamsteering (DBS). This geometric beamformer, based on the UEs' direction, steers a single beam toward each UE, using just digital phase shifters. Our previous works \cite{khaled2019performance,Israakhaled2020WSA,khaled2020jointSDMA} show that DBS is an appealing beamformer in the sparse mmWave environment trading-off complexity, performance and channel feedback. This is done using the stochastic mmWave channel model, called NYUSIM \cite{Samimi2016} built on real measurements and developed by New York University. The spatial resolution to discriminate users depends on the beamwidth, i.e. the number of antennas. However, the implementation of a large number of antennas is difficult due to high power consumption and transceiver complexity. We propose to tackle the problem of congested cell where the number of users is close to the number of antennas. To do so, we boost the DBS performance, without adding any additional antenna, by implementing PD-NOMA, as an alternative and innovative solution. Moreover, we aim to maintain a low complexity, low overhead solution, by assuming partial CSI at the BS. 

Specifically, the major contributions of this paper are summarized as follows:
\begin{itemize}
	\item We design a NOMA-DBS scheme that enhances the DBS performance without adding additional antennas, and then offers affordable hardware complexity and energy consumption. This is performed by integrating DBS with the PD-NOMA transmission, by which more than one UE can be served by each beam. The aim of the proposed NOMA-DBS scheme is to boost the DBS performance by reducing the spatial inter-beam interference (IBI). 
	\item We leverage the spatial behavior of DBS in mono- and multi-path environments, and then we define a geometric interference metric based on the UE's direction. Accordingly, we propose a user clustering algorithm for 2-UE NOMA-DBS system based on the aforementioned geometric metric instead of the effective channel gain or the channel correlation as adopted in the literature, so that reduces spatial IBI based only on the UEs' directions. 
	\item Given the set of clusters, we aim at maximizing the system throughput under transmission power and SIC constraints. Accordingly, the corresponding optimal intra-beam power allocation method is achieved by assuming a fixed inter-beam power allocation. However, the obtained scheme requires full channel feedback. 
	\item To tackle the system overhead issue, we define a partial CSI-based geometric function relying on the overall system throughput. Thereby, we reformulate a sub-optimal problem that maximizes the new geometric function, instead of the system throughput. 
	\item By simulations, we verify the performance in terms of both spectrum and energy efficiencies of the proposed NOMA-DBS scheme with either full or partial channel feedback in rural environment, using NYUSIM.
\end{itemize}

The rest of this paper is organized as follows: Section \ref{sec:systemmodel} presents the model of the proposed NOMA-DBS scheme. In Section \ref{sec:userclustering}, we investigate the spatial behavior of DBS and propose a user clustering method. Section \ref{sec:FullCSI} and Section \ref{sec:PartialCSI}, respectively, discuss the proposed full and partial CSI-based intra-cluster power allocation schemes. Section \ref{sec:simulationresults} numerically evaluates the performance of the proposed schemes, while Section \ref{sec:Conclusion} concludes the paper. \\
\textit{\textbf{Symbol Notations}}: \textbf{A}, \textbf{a} and  $ a $ denote matrix, vector and scalar, respectively.  $ \mathcal{N}(\mu,\sigma^2) $ is a Gaussian random variable with mean $ \mu $ and variance $\sigma^2$. $ (.)^T $,  $ (.)^H $ and $  \mathrm{Tr}(\cdot) $ stand for the transpose, the conjugate transpose and  the trace, respectively.

\section{Description of NOMA-DBS System}
\label{sec:systemmodel}

This study is conducted on a downlink 3 dimensions (3D) MU-MIMO system consisting of a BS equipped with a 2 dimensions (2D) array of $M= M_H\times M_V$ antennas, where $ M_H $ and $ M_V $ are the number of horizontal and vertical antennas, respectively. The BS concurrently communicates with $ K $ single-antenna UEs ($ K<M $), which are randomly distributed in the cell. Denote $ \mathcal{K}= \{1, \cdots, K\} $ as the set of UEs. Based on the angle information of the UE's direction and using digital phase shifters, classical DBS generates $ K $ beams. Each beam only serves a single UE. Thus, with space division multiple access (SDMA) strategy implemented via DBS, all UEs exploit the whole system bandwidth $ B $ for high data rate communication. 

In a congested cell, where the number of users is close to the number of antennas, the main problem of DBS is IBI, that impacts drastically the system performance in terms of sum-rate. To tackle this issue, we design a NOMA-DBS system, as illustrated in Fig. \ref{fig:SystModel}. In the proposed scheme, UEs located at the same direction are allowed to be served by one beam and the BS communicates with them by exploiting the PD-NOMA transmission protocol.

\begin{figure}[!h]
	\centerline{\includegraphics[scale=0.5]{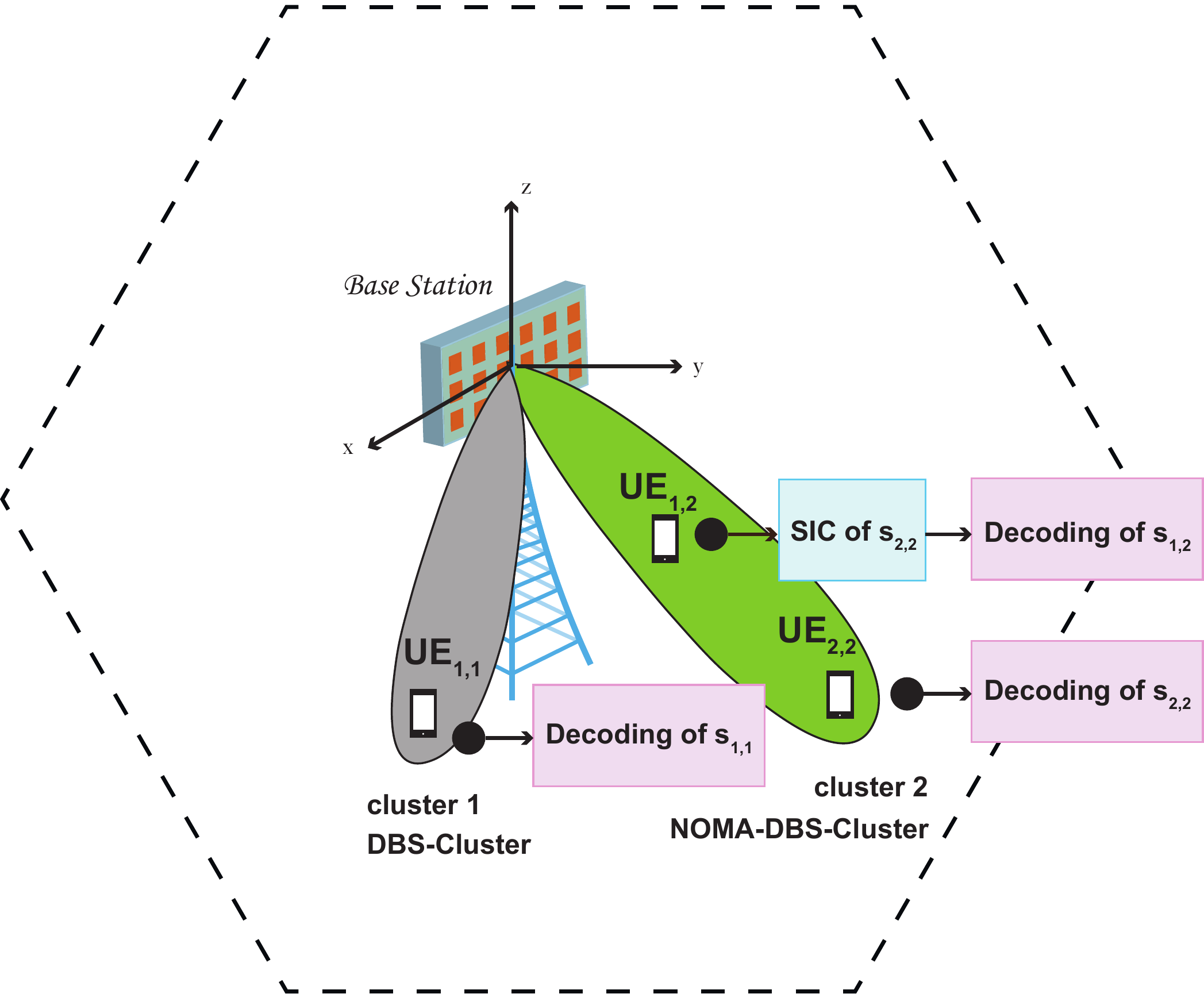}}
	\caption{NOMA-DBS beamforming system.}
	\label{fig:SystModel}
\end{figure} 

Consequently, two different types of cluster exist, namely DBS-cluster and NOMA-DBS-cluster as depicted in Fig. \ref{fig:SystModel}. In the former cluster, only one UE is served using classical DBS. However, in the latter, multiple UEs are served by the same beam and share the same time-frequency resources using PD-NOMA. Now, the $ K $ UEs are served simultaneously by $C $ $ (C\le K<M)  $ beams (or clusters).
\subsection{System model}
Let $ C_d $ and $ C_n $ be the number of DBS- and NOMA-DBS-clusters, respectively and $ K_c$ be the number of UEs in the $ c $th cluster. Hence, $ K_c=1, \ \forall c\in \mathcal{C}_d$. Denote $ \mathcal{C}_n= \{1, \cdots, C_n\} $ and  $ \mathcal{C}_d= \{C_n+1, \cdots, C\} $  as the set of DBS- and NOMA-DBS-clusters, respectively, $ \mathcal{C}=\mathcal{C}_n\cup \mathcal{C}_d $ as the set of all clusters and $ \mathcal{K}_c=\{1,\cdots, K_c\} $ as the set of UEs in the $ c $th cluster.

Based on available CSI and user clustering, detailed in section \ref{sec:userclustering}, the BS multiplexes UEs in the power domain using SC, so as to transmit the signals of UEs grouped in the same cluster within the same beam. Thus, BS constructs the superimposed signal $ \textbf{s} \in  \mathbb{C}^{C \times 1}  $, given by:
\begin{equation}
\textbf{s} =
\begin{bmatrix}
\sum_{l=1}^{K_1}\sqrt{\gamma_{l,1}p_1}s_{l,1}\\
\vdots\\
\sum_{l=1}^{K_C}\sqrt{\gamma_{l,C}p_C}s_{l,C}
\end{bmatrix}=
\begin{bmatrix}
\textbf{s}_{1}\\
\vdots\\
\textbf{s}_{C}
\end{bmatrix},
\end{equation} 
where $ s_{l,c} $ is the modulated complex symbol corresponding to the $ l $th UE in the $ c $th cluster, denoted as UE$ _{l,c} \  (l \in \mathcal{K}_c,\  c \in \mathcal{C}) $, $ p_c $ is the power allocated to the $ c $th cluster, and $ \gamma_{l,c} $ is the power allocation coefficient for $ s_{l,c} $, such that $ \sum_{l=1}^{K_c} \gamma_{l,c}=1 $. 

Moreover, according to DBS, the BS steers a beam toward each cluster in both azimuth and elevation domains. This implies that the transmit BF matrix is defined as  $ \textbf{W} =[\textbf{w}_{1}  \cdots \textbf{w}_{C}]  \in \mathbb{C}^{M \times C}$ with $\textbf{w}_{c} = \textbf{a}_{c} \in \mathbb{C}^{M \times 1}$, the BF weight vector corresponding to the $ c $th cluster. $ \textbf{a}_c=\textbf{a}(\vec{\Theta}_c) \in \mathbb{C}^{M \times 1} $ is the array steering vector corresponding to the direction $ \vec{\Theta}_{c}=(\theta_c,\phi_c)$ of the $ c $th cluster, with $ \theta $ the azimuth angle and $ \phi $ the elevation angle. In this study, we conduct a 3D MIMO channel with uniform planar array (UPA) located in $ xoz $ plane, as shown in Fig. \ref{fig:SystModel}. Therefore, the antenna array steering vector  $\textbf{a}(\vec{\Theta})=\textbf{a}(\theta,\phi)=\textbf{a}_\text{az}(\theta,\phi)\otimes\textbf{a}_\text{el}(\phi)$ is a function of  $ \theta $ and $ \phi $ with:
\begin{equation}
\label{eq:a_az}
	\textbf{a}_\text{az}(\theta,\phi)=\left[ 1, e^{j2\pi \frac{d}{\lambda}\cos(\theta)\sin(\phi)}, \cdots,  e^{j2\pi(M_H-1) \frac{d}{\lambda}\cos(\theta)\sin(\phi)} \right]^T ,
\end{equation}
\begin{equation}
\label{eq:a_el}
\textbf{a}_\text{el}(\phi)=\left[ 1, e^{j2\pi \frac{d}{\lambda}\cos(\phi)}, \cdots,  e^{j2\pi(M_V-1) \frac{d}{\lambda}\cos(\phi)} \right]^T ,
\end{equation}
where $ d $ is the antenna spacing and $ \lambda $ is the wavelength.

Then, the signal $ \textbf{x} \in  \mathbb{C}^{M \times 1}   $ transmitted from BS can be expressed as:
\begin{equation}\label{}
\textbf{x}=\sqrt{\eta}\textbf{W}\textbf{s},
\end{equation}
where $ \eta= \frac{1}{ \mathrm{Tr}({\textbf{W}}^H \textbf{W})}=\frac{1}{MC} $ is the normalization factor that eliminates the beamforming effect on the transmission power. Thus, the received signal is expressed as:
\begin{equation}\label{eq:received_signal}
{y}_{l,c} = \sqrt{\eta}\textbf{h}_{l,c}\textbf{w}_c  \textbf{s}_{c}+ \mathlarger{\sum}_{b \in \mathcal{C}, b\neq c } \sqrt{\eta}\textbf{h}_{l,c}\textbf{w}_{b}\textbf{s}_{b} + {n}_{l,c},
\end{equation}
where $ {n}_{l,c}\sim \mathcal{N}(0,\,\sigma_n^{2})\ $ is the additive white Gaussian noise, and $ \textbf{h}_{l,c} \in  \mathbb{C}^{1 \times M} $ is the multi-path 3D channel vector  between BS and UE$ _{l,c}$, which is given by: 
\begin{equation}  
\label{eq:channelvector}
\textbf{h}_{l,c}=\sum_{n=1}^{N_{l,c}} \alpha_{n,l,c} \textbf{a}^H(\vec{\Theta}_{n,l,c}),
\end{equation}
where $ N_{l,c} $ is the number of multi-path components in UE$ _{l,c}$'s channel, $ \alpha_{n,l,c} $ is the complex gain and $ \vec{\Theta}_{n,l,c} = (\theta_{n,l,c},\phi_{n,l,c}) $  is the departure angle of the $ n $th path in UE$ _{l,c}$'s channel, respectively. Denote $ n=1 $ as the index of the line-of-sight (LOS) path, i.e., the strongest path generated by NYUSIM.
\subsection{System model with SIC}

In the rest of the study, we consider only 2 UEs per NOMA-DBS cluster, such that $ K_c=2$, to limit the complexity of SIC at the receiver. The number $ K_c$ of UEs per DBS-cluster is equal to 1. 

For DBS-cluster and 2-UE NOMA-DBS-cluster, the direction $ \vec{\Theta}_{c} $ of the $ c $th cluster used to form the beams is calculated as follows:
\begin{equation}\label{eq:clusterangle}
\vec{\Theta}_c=(\theta_c,\phi_c) =
\begin{cases}
(\theta_{1,1,c},\phi_{1,1,c}) & \text{if $ K_c=1 $},\\
(\frac{\theta_{1,1,c}+\theta_{1,2,c}}{2},\frac{\phi_{1,1,c}+\phi_{1,2,c}}{2})& \text{if $ K_c=2 $}.
\end{cases}
\end{equation} 
Thus, only the departure angle of the LOS path ($n=1$) is considered.

Based on the user clustering method and the set of instantaneous received power $ |\textbf{h}_{l,c}\textbf{w}_{c}|^2 \  (l \in \mathcal{K}_c, c \in \mathcal{C}_n) $ at all UEs, the two UEs regrouped in each NOMA-DBS cluster are classified between strong and weak UEs according to the descending order of their received power, i.e., $ |\textbf{h}_{1,c}\textbf{w}_c|^2>|\textbf{h}_{2,c}\textbf{w}_c|^2 $. Moreover, SIC applied at strong UE, denoted as UE$ _{1,c}$ will work properly if the power allocation coefficients satisfy the following condition as mentioned in \cite{ali2016dynamic}:
\begin{equation}\label{eq:Const_SIC_dynamic}
\frac{\eta\gamma_{2,c}p_c|\textbf{h}_{1,c}\textbf{w}_{c}|^2}{\sigma_n^2}-\frac{\eta\gamma_{1,c}p_c|\textbf{h}_{1,c}\textbf{w}_{c}|^2}{\sigma_n^2}\ge P_{\min} \quad \forall c \in \mathcal{C}_n,
\end{equation}
where $P_{\min}$ is the minimum power difference. In fact, (\ref{eq:Const_SIC_dynamic})  maintains the power difference between strong and weak UEs' signals, so that SIC at strong UE's receiver can decode first the data of the weak UE, and then subtract it from the superimposed signal to decode his own signal. However, NOMA-DBS system also suffers from the inter-cluster interference unlike the classical PD-NOMA in \cite{ali2016dynamic}. Therefore, we propose to take into account the interference from the other clusters and (\ref{eq:Const_SIC_dynamic}) will be given by:
\begin{equation}\label{eq:Const_SIC}
\frac{\eta\gamma_{2,c}p_c|\textbf{h}_{1,c}\textbf{w}_{c}|^2}{\underbrace{\mathlarger{\sum}_{b \in \mathcal{C}, b\neq c }\eta p_b|\textbf{h}_{1,c}\textbf{w}_{b}|^2}_{I_{:\rightarrow1,c}}+\sigma_n^2}-\frac{\eta\gamma_{1,c}p_c|\textbf{h}_{1,c}\textbf{w}_{c}|^2}{\mathlarger{\sum}_{b \in \mathcal{C}, b\neq c }\eta p_b|\textbf{h}_{1,c}\textbf{w}_{b}|^2+\sigma_n^2}\ge P_{\min},
\end{equation}
where $I_{:\rightarrow l,c}$ is the interference at UE$ _{l,c}$ from all other clusters. From (\ref{eq:Const_SIC}), we find that the transmit power of UE$ _{2,c}$ must be larger than that of UE$ _{1,c}$, thus $ \gamma_{1,c}<\gamma_{2,c} $ satisfying the PD-NOMA transmission protocol. 

According to PD-NOMA principle, UE$ _{1,c}$ in NOMA-DBS cluster conducts SIC to eliminate the intra-cluster interference from UE$ _{2,c} $. We assume a perfect SIC at the receiver side of UE$ _{1,c}$. In other words, the interference at UE$_{1,c}$ from UE$ _{2,c} $, denoted as $ I_{2\rightarrow1,c}$, is totally eliminated, i.e., $ I_{2\rightarrow1,c} = 0$. Therefore, the received signal ${y}_{1,c} $ at UE$ _{1,c}$ after SIC can be expressed as:
\begin{equation}\label{eq:yc1}{y}_{1,c} = \sqrt{\eta}\sqrt{\gamma_{1,c}p_c}\textbf{h}_{1,c}\textbf{w}_c  s_{1,c}+ \mathlarger{\sum}_{b \in \mathcal{C}, b\neq c }\sqrt{\eta}\sqrt{p_b}\textbf{h}_{1,c}\textbf{w}_{b}\textbf{s}_{b} + {n}_{1,c}.
\end{equation}
And, the signal-to-interference-plus-noise ratio (SINR) $ \text{SINR}_{1,c}^{\text{NOMA-DBS}} $ of UE$ _{1,c}$ in NOMA-DBS cluster is given by:
\begin{equation}\label{eq:sinr_noma_1}
\text{SINR}_{1,c}^{\text{NOMA-DBS}} = \frac{\eta \gamma_{1,c}p_c|\textbf{h}_{1,c}\textbf{w}_c|^2}{   \mathlarger{\sum}_{b \in \mathcal{C}, b\neq c }\eta p_b|\textbf{h}_{1,c}\textbf{w}_{b}|^2+  \sigma^2_n}.
\end{equation}

However, UE$ _{2,c} $ of NOMA-DBS cluster treats the UE$ _{1,c}$ signal as an interference and the received signal $ {y}_{2,c} $ at UE$ _{2,c} $ can be expressed as:
\begin{multline}
{y}_{2,c} = 
\underbrace{\sqrt{\eta}\sqrt{\gamma_{2,c}p_c}\textbf{h}_{2,c} \textbf{w}_cs_{2,c}}_{\text{Desired signal}}+ \underbrace{\sqrt{\eta}\sqrt{\gamma_{1,c}p_c}\textbf{h}_{2,c} \textbf{w}_cs_{1,c}}_{\text{Intra-beam interference}}+ \underbrace{\mathlarger{\sum}_{b \in \mathcal{C}, b\neq c }\sqrt{\eta} \sqrt{p_b} \textbf{h}_{2,c}\textbf{w}_{b}\textbf{s}_{b}}_{\text{Inter-beam interference}}+{n}_{2,c}.
\end{multline} 
Therefore, the SINR $ \text{SINR}_{2,c}^{\text{NOMA-DBS}} $ of UE$ _{2,c}$ is given by:
\begin{equation}\label{eq:sinr_noma_2}
\text{SINR}_{2,c}^{\text{NOMA-DBS}} = \frac{\eta \gamma_{2,c}p_c|\textbf{h}_{2,c}\textbf{w}_c|^2}{ \eta \gamma_{1,c}p_c|\textbf{h}_{2,c}\textbf{w}_c|^2 +  \mathlarger{\sum}_{b \in \mathcal{C}, b\neq c}\eta p_b |\textbf{h}_{2,c}\textbf{w}_{b}|^2 +  {\sigma^2_n}}.
\end{equation} 

For DBS-cluster, the received signal $ \textbf{y}_{1,c} $ and $ \text{SINR}_{1,c}^{\text{DBS}} $ are also calculated as in (\ref{eq:yc1}) and (\ref{eq:sinr_noma_1}), respectively, with $ \gamma_{1,c}=1 $.

The SINR achieved at UE$_{l,c}$ in DBS and NOMA-DBS clusters can be rewritten as:
\begin{equation}\label{eq:sinr_1c_dbs_fct_zeta}
\text{SINR}_{1,c}^{\text{DBS}}=\frac{\psi_{1,c}}{\nu_{1,c}}= \zeta_{1,c},
\end{equation}
\begin{equation}\label{eq:sinr_1c_fct_zeta}
\text{SINR}_{1,c}^{\text{NOMA-DBS}}=\frac{\psi_{1,c}\gamma_{1,c}}{\nu_{1,c}}= \zeta_{1,c}\gamma_{1,c},
\end{equation}
\begin{equation}\label{eq:sinr_2c_fct_zeta}
\text{SINR}_{2,c}^{\text{NOMA-DBS}}=\frac{\psi_{2,c}(1-\gamma_{1,c})}{\nu_{2,c}+\psi_{2,c}\gamma_{1,c}}= \frac{\zeta_{2,c}(1-\gamma_{1,c})}{1+\zeta_{2,c}\gamma_{1,c}},
\end{equation}
where 
\begin{equation}\label{}
\psi_{l,c}=\eta p_c|\textbf{h}^H_{l,c}\textbf{w}_c|^2, \quad l \in \mathcal{K}_c, c \in \mathcal{C}_n,
\end{equation}
\begin{equation}\label{}
\nu_{l,c}=\mathlarger{\sum}_{b \in \mathcal{C}, b\neq c }\eta p_b|\textbf{h}^H_{l,c}\textbf{w}_{b}|^2 +  \sigma^2_n, \quad l \in \mathcal{K}_c, c \in \mathcal{C}_n,
\end{equation}
 \begin{equation}\label{}
\zeta_{l,c}=\frac{\psi_{l,c}}{\nu_{l,c}}, \quad l \in \mathcal{K}_c, c \in \mathcal{C}_n.
\end{equation}
As seen in (\ref{eq:received_signal}), $\psi_{l,c}$ represents the received power at UE$_{l,c}$ of the superimposed signal, that is transmitted in the $c$th cluster. And, $ \nu_{l,c} $ represents the interference $I_{:\rightarrow l,c}$ at UE$_{l,c}$ coming from the other clusters plus the noise power. This implies that $ \zeta_{l,c} $ is the superimposed-signal to other-clusters interference plus noise ratio at  UE$_{l,c}$, which is different than SINR$_{l,c}$ for NOMA-DBS cluster and represents SINR$_{1,c}$ for DBS-cluster. Therefore, (\ref{eq:Const_SIC}) can be simplified as follows:
\begin{equation}\label{eq:SIC_constraint}
\gamma_{2,c}-\gamma_{1,c}\ge \frac{P_{\min}}{\zeta_{1,c}}.
\end{equation}

\section{3D Geometric Interference Metric $\beta$ and User Clustering Method ($\beta$-UC)}
\label{sec:userclustering}
In this section, we discuss the factors affecting the interference level at each UE in case of classical DBS, in both mono- and multi-path environments, to define later a 3D geometric interference metric. Subsequently, we propose a user clustering strategy based on the aforementioned metric. 
\subsection{Classical DBS Performance}
\label{ssec:DBSperf}
In case of classical DBS, we use the subscript $ k $ $ (\forall k \in \mathcal{K}) $ instead of $ (l,c )$.
\subsubsection{Mono-path environment}~ \\
In mono-path environment, the SINR $ \text{SINR}_k^{\text{DBS}} $ achieved using DBS at UE$ _k $ $ (\forall k \in \mathcal{K}) $ is given by \cite{Israakhaled2020WSA}:
\begin{equation}\label{eq:sinr_dbs_mono}
{\text{SINR}_k^{\text{DBS}}=\frac{|\textbf{a}^H_{1,k}\textbf{a}_{1,k}|^2}{\underbrace{\mathlarger{\sum}_{u\in \mathcal{K}, u\neq k} |\textbf{a}^H_{1,k}\textbf{a}_{1,u}|^2}_{I_k} + \frac{\sigma^2_n}{\eta^{\text{DBS}}|\alpha_{1,k}|^2}}},
\end{equation}
where $\textbf{a}_{1,k}=\textbf{a}(\vec{\Theta}_{1,k})$ is the steering vector corresponding to the LOS path angle of UE$ _k $, denoted as $ \vec{\Theta}_{1,k}=(\theta_{1,k},\phi_{1,k}) $, $ \alpha_{1,k} $ is the complex gain of the LOS path in UE$ _{k}$'s channel, and $ \eta^{\text{DBS}} $ is the normalization factor when BS adopts only DBS.

For classical DBS in mono-path environment, the interference term $I_k$ at UE$ _k $, observed at the
denominator in (\ref{eq:sinr_dbs_mono}), only depends on $\mu_{k,u}\stackrel{\text{def}}{=}|\textbf{a}^H_{1,k}\textbf{a}_{1,u}|$ $(\forall u\neq k)$, i.e., the spatial interference causing by the beam generated for UE$ _u $ with the LOS path in the UE$_k$'s channel. In other words, if UE$ _k $ and UE$ _u $ are located at the same direction, i.e., $ \vec{\Theta}_{1,k}$ is closed to $ \vec{\Theta}_{1,u}$, then DBS system suffers from high IBI, and thus the DBS performance degrades. 
\subsubsection{Multi-path environment} ~ \\ 
Similarly, in multi-path environment, DBS also generates a single beam
in the LOS path direction. Therefore, $\text{SINR}_k^{\text{DBS}}$ is given by \cite{Israakhaled2020WSA}:
\begin{equation}
\label{eq:sinr_dbs_multi}
\text{SINR}_k^{\text{DBS}}=\frac{\left| \textbf{a}^H_{1,k}\textbf{a}_{1,k} +{\mathlarger{\mathlarger{\sum}}_{n=2}^{N_k}  \frac{\alpha_{n,k}}{\alpha_{1,k}}\textbf{a}^H_{n,k}\textbf{a}_{1,k}}\right|  ^2}{ \underbrace{ \mathop{\mathlarger{\sum}_{u\in \mathcal{K}, u\neq k}}\left|  \mathlarger{\mathlarger{\sum}}_{n=1}^{N_k}\frac{\alpha_{n,k}}{\alpha_{1,k}}{\textbf{a}^H_{n,k}\textbf{a}_{1,u}}\right| ^2}_{I_k} + \frac{\sigma^2_n}{\eta^{\text{DBS}}|\alpha_{1,k}|^2}},
\end{equation}
where $ \alpha_{n,k} $,  $ \vec{\Theta}_{n,k} $ and $ \textbf{a}_{n,k}=\textbf{a}(\vec{\Theta}_{n,k}) $ are complex gain, angle and  steering vector of the $ n $th path in UE$ _k$'s channel, respectively.

As seen in (\ref{eq:sinr_dbs_multi}), the interference $I_k$ at UE$ _k $ is caused by the LOS beam of UE$ _u $  ($\forall u\neq k$), represented by $\textbf{a}_{1,u}$, that interferes with all paths of  UE$ _k $, represented by $\textbf{a}^H_{n,k}$. Thus, the interference results from both non-LOS (NLOS) and LOS paths. However, DBS is a geometric beamformer, that doesn't exploit the NLOS paths and generates a single beam in the LOS path direction, i.e., the directions of the UEs. 
This simplifying assumption makes possible partial CSI and feedback solely based on the UEs' directions but restricts the sum rate, since all the path diversity is not exploited. 

\subsection{ Interference Metric}
\label{ssec:perfMetric}
To enhance the DBS performance, our proposed  NOMA-DBS scheme divides UEs into clusters based on their spatial direction, which directly reflects in their spatial interferences.

Based on the simplifying assumption and the previous analysis in Section \ref{ssec:DBSperf}, we define $ \beta_{k,u}=\frac{1}{M}\mu_{k,u}= \frac{1}{M}|\textbf{a}^H_{1,k}\textbf{a}_{1,u}| = \frac{1}{M}|\textbf{a}^H_{1,u}\textbf{a}_{1,k}|=\frac{1}{M}\mu_{u,k}=\beta_{u,k}$ as a 3D geometric metric representing the normalized spatial interference at UE$_k$ from the beam generated for UE$_u$, and vice versa. 
Applying (\ref{eq:a_az}) and (\ref{eq:a_el}), using UPA,  $ \beta_{k,u} $ is given by:
	\begin{subequations}
	\begin{align}
	\beta_{k,u} & = \frac{1}{M}\left| \mathlarger{\sum}_{m_H=1}^{M_H}\sum_{m_V=1}^{M_V}e^{j2\pi\frac{d}{\lambda}\{(m_H-1)(\cos(\theta_{1,k})\cos(\phi_{1,k})-\cos(\theta_{1,u})\cos(\phi_{1,u}))+(m_V-1)(\sin(\phi_{1,k})-\sin(\phi_{1,u}))\}}\right| 
	\end{align}
	\begin{align}
	&  = \left| \frac{\sin\left( \frac{M_H\pi d}{\lambda}(\cos(\theta_k)\cos(\phi_k)-\cos(\theta_u)\cos(\phi_u))\right) }{M_H\sin\left( \frac{\pi d}{\lambda}(\cos(\theta_k)\cos(\phi_{1,k})-\cos(\theta_{1,u})\cos(\phi_{1,u}))\right) }\frac{\sin\left( \frac{M_V\pi d}{\lambda}(\sin(\phi_{1,k})-\sin(\phi_{1,u})\right) }{M_V\sin\left( \frac{\pi d}{\lambda}(\sin(\phi_{1,k})-\sin(\phi_{1,u})\right) }\right|  
	\end{align}.
	\label{eq:beta}
\end{subequations}
According to (\ref{eq:beta}), $ \beta_{k,u} \in (0,1)$, and $\beta_{k,u}$ represents the value of the normalized array pattern  $ {AF}^n_{(\theta_{1,u},\phi_{1,u})}(\theta,\phi) $ \cite{hansen2009phased} of the beam pointed at UE$ _u $ direction $ (\theta_{1,u},\phi_{1,u}) $ for $ (\theta,\phi)=(\theta_{1,k},\phi_{1,k}) $, i.e.,
\begin{equation}\label{eq:beta_AF}
\beta_{k,u}={AF}^n_{(\theta_{1,u},\phi_{1,u})}(\theta_{1,k},\phi_{1,k}).
\end{equation} 

\begin{remark}
Based on (\ref{eq:beta_AF}), we can see that if  the angular distance $ (\Delta\theta,\Delta\phi)\stackrel{\text{def}}{=} (|\theta_{1,k}- \theta_{1,u}|,|\phi_{1,k}- \phi_{1,u}|) \rightarrow (0,0)$ , i.e., $ (\theta_{1,u},\phi_{1,u}) \rightarrow  (\theta_{1,k},\phi_{1,k})$, then $  \beta_{k,u}  \rightarrow 1 $. Thus, the smaller $ \Delta\theta $ and $ \Delta\phi $ are, i.e., UE$ _k $ and UE$ _u $ are closely at the same direction in azimuth and elevation domains, the larger $ \beta_{k,u} $ is. 
\end{remark}

\begin{remark}
$ {AF}^n_{(\theta_{1,u},\phi_{1,u})}(\theta,\phi) = \sqrt{\frac{1}{2}}$ gives a measure of the 3dB-beamwidth $ {\vec{\Omega}_u}^{3\text{dB}}=({{\Omega_\text{az}}}_u^{3\text{dB}},{{\Omega_\text{el}}}_u^{3\text{dB}})  $ expressed in terms of the azimuth $ {{\Omega_\text{az}}}_{u} $ and elevation $ {\Omega_\text{el}}_{u} $ widths of the beam generated at UE$ _u $, which depend on the beam direction $\vec{\Theta}_u$ and the number of horizontal and vertical antennas. $ {{\Omega_\text{az}}}_u^{3\text{dB}} = |\theta_{1,u} - \theta| $ and $ {{\Omega_\text{el}}}_u^{3\text{dB}} = |\phi_{1,u} - \phi| $ are defined as the angular distances that satisfy $ {AF}^n_{(\theta_{1,u},\phi_{1,u})}(\theta,\phi) =\sqrt{\frac{1}{2}}$. This implies that $ \beta_{k,u}$ implicitly provides information regarding the beamwidth $ {\vec{\Omega}_u}^{3\text{dB}} $. 
\end{remark}

Therefore, we define the interference threshold $ \beta_0 $, such that $ \beta_{k,u}\ge \beta_0 $ means that the LOS path of UE$ _k $ lies in the UE$ _u $ beam in both azimuth and elevation domains. Particularly, the $ \beta_0 $-beamwidth $\vec{\Omega}_u^{\beta_0}$ defines the azimuth and elevation angular distances satisfying $ |{AF}^n_{(\theta_{1,u},\phi_{1,u})}(\theta,\phi)|=\beta_0 $. Thus, $ \beta $ is an important metric to determine the level of spatial interference based on a geometric partial channel knowledge, namely the UE direction, i.e., $ (\theta_{1,k},\phi_{1,k}) $ ($\forall k \in \mathcal{K}$).
\subsection{User clustering algorithm}

We would like to apply NOMA to UEs that produce high spatial interference to each other, whereas the others are treated via DBS. To perform user clustering, we take profit of the 3D geometric interference based on the angles of the LOS path of each user. To the best of our  knowledge, it is the first time that such a geometric metric is used in user clustering. The main idea of our proposed UC, denoted as $ \beta$-UC, is to regroup two-by-two UEs (UE$ _u $ and UE$ _k $) with large inter-beam interference, i.e., $ \beta_{k,u} >\beta_0 $, into the same cluster, whose beam angle is defined in \ref{eq:clusterangle}. Subsequently, we assign a single cluster to the remaining UEs with cluster angle equal to their LOS path angle, as mentioned in (\ref{eq:clusterangle}). The proposed $ \beta$-UC algorithm can be given as in Algorithm \ref{alg:UserClsuteringAlgorithm}.
\LinesNumberedHidden
\begin{algorithm}[]
	\DontPrintSemicolon	
	\KwRequire{ $ \vec{\Theta}_{1,k} = (\theta_{1,k},\phi_{1,k}) $, $ \beta_{k,u} \  \forall  k,u< k \in \mathcal{K}$, $ \beta_0 $. }
	\KwOutput{ $ \mathcal{G} = \left\lbrace \textbf{g}_c\right\rbrace$: set of clusters with $ \textbf{g}_c $ the array of UEs belonging to the $ c $th cluster.}
	\KwStep{Select the groups of 2-UEs who have a spatial interference greater than $ \beta_0 $;  \newline $ \mathcal{D}=\{(k,u), \ \beta_{k,u}\ge \beta_0, k<u \in \mathcal{K}\}$.}
	\KwStepp{Select from $ \mathcal{D} $ the subset of 2-UEs having the largest spatial interference;  \newline $ (l,m)=\underset{(k,u)\in \mathcal{D}}{\max }\{\beta_{k,u}\}, $  $\mathcal{G}=\mathcal{G}\cup [l,m] $.}
	\KwSteppp{Remove from $ \mathcal{D} $ the spatial interference of UEs selected in Step 2 with each other and with any other UEs to prevent their existence in another cluster;  \newline $\mathcal{D}\leftarrow \mathcal{D}- (l,m) - (l,w)-(v,l) - (m,q) - (t,m) $ $ (\forall w>l,v<l,q>m,t<m \in \mathcal{K})$.}
	\KwStepppp{Repeat Step 2 and Step 3 until $ \mathcal{D}=\emptyset $.}
	\KwSteppppp{Assign for each one of the remaining UEs a beam in the corresponding UE's direction; \newline $  s \in \mathcal{K},  s \notin \mathcal{G}, \mathcal{G}=\mathcal{G}\cup [s] $. }
	\caption{$ \beta$-based User Clustering Algorithm}	
	\label{alg:UserClsuteringAlgorithm}
\end{algorithm}

\section{Full CSI-based Power Allocation Scheme}
\label{sec:FullCSI}

In this section, we design a power allocation (PA) scheme based on full CSI, that maximizes the throughput of the NOMA-DBS system. However, DBS is a geometric beamformer that requires only the UE's direction. For that, it's more significant to design a PA scheme based on partial CSI, which is presented in Section \ref{sec:PartialCSI}.
\subsection{Problem Formulation}
The system throughput,  which is defined as the maximum quantity of the transmitted information over the NOMA-DBS system, is given by a sum of the UEs' data rates:
\begin{subequations}
	\begin{align}
	R_T=\sum_{k\in \mathcal{K}}R_k=\sum_{c\in \mathcal{C}}\sum_{l\in\mathcal{K}_c}R_{l,c} 
	\end{align}
	\vspace*{-3mm}
	\begin{align}
	\label{}
	&  =\underbrace{\sum_{c\in \mathcal{C}_d}R_{1,c}}_{\text{DBS clusters}}+\underbrace{\sum_{c\in \mathcal{C}_n}(R_{1,c}+R_{2,c})}_{\text{NOMA-DBS clusters}},
	\end{align}
\end{subequations}
where $ R_{l,c} $ [$ bps $] is the achievable data rate for UE$ _{l,c}$ and can be expressed, according to Shannon formula, as:
\begin{equation}\label{}
R_{l,c}= B\log_2(1+\text{SINR}_{l,c}), \quad   l \in \mathcal{K}_c, c \in \mathcal{C}_d\cup\mathcal{C}_n.
\end{equation} 
Therefore, $ R_T $ can be defined as a function of the inter-cluster PA $\boldsymbol{p}= \{p_c, \ c \in \mathcal{C}\} $ and the intra-cluster PA $\boldsymbol{\gamma}= \{\gamma_{l,c},\ l\in \{1,2\}, \ c \in \mathcal{C}_n\} $, since $ \gamma_{1,c} =1$  $ (\forall c \in \mathcal{C}_d)$, i.e., $ R_T = F(\boldsymbol{\gamma},\boldsymbol{p}) $:
\begin{equation}\label{}
R_T  = F^{\text{DBS}}(\boldsymbol{\gamma},\boldsymbol{p}) + F_{1}^{\text{NOMA-DBS}}(\boldsymbol{\gamma},\boldsymbol{p}) + F_{2}^{\text{NOMA-DBS}}(\boldsymbol{\gamma},\boldsymbol{p}),
\end{equation}  
where $ F^{\text{DBS}} $ is the throughput of UEs grouped in DBS-clusters, $ F_{1}^{\text{NOMA-DBS}} $ is the throughput of strong UEs grouped in NOMA-DBS-clusters and $ F_{2}^{\text{NOMA-DBS}} $ is the throughput of weak UEs grouped in NOMA-DBS-clusters. Using (\ref{eq:sinr_1c_dbs_fct_zeta}), (\ref{eq:sinr_1c_fct_zeta}) and (\ref{eq:sinr_2c_fct_zeta}), $ F^{\text{DBS}} $, $ F_{1}^{\text{NOMA-DBS}} $ and $ F_{2}^{\text{NOMA-DBS}} $ are given by (\ref{eq:sumRate_DBS_1c}), (\ref{eq:sumRate_NOMADBS_1c}) and (\ref{eq:sumRate_NOMADBS_2c}), respectively:
\begin{equation}
\label{eq:sumRate_DBS_1c}
F_{1}^{\text{NOMA-DBS}}(\boldsymbol{\gamma},\boldsymbol{p})=B\sum_{c\in \mathcal{C}_n}\log_2(1+\zeta_{1,c}(\boldsymbol{p}) ),
\end{equation}
\begin{equation}
\label{eq:sumRate_NOMADBS_1c}
F_{1}^{\text{NOMA-DBS}}(\boldsymbol{\gamma},\boldsymbol{p})=B\sum_{c\in \mathcal{C}_n}\log_2(1+ \zeta_{1,c}(\boldsymbol{p})\gamma_{1,c}),
\end{equation}
\begin{equation}
\label{eq:sumRate_NOMADBS_2c}
F_{1}^{\text{NOMA-DBS}}(\boldsymbol{\gamma},\boldsymbol{p})=B\sum_{c\in \mathcal{C}_n}\log_2\left( 1+\frac{\zeta_{2,c}(\boldsymbol{p})(1-\gamma_{1,c})}{1+\zeta_{2,c}(\boldsymbol{p})\gamma_{1,c}}\right). 
\end{equation}

The aim of the proposed PA scheme is to maximize the sum of the UEs' data rates over the NOMA-DBS system. Therefore, the corresponding optimization problem denoted by $ P1 $ is formulated as follows:
\begin{equation}\label{pb:OptimizedPA} 
\begin{aligned}P1: \quad \{\stackrel{\star}{{\boldsymbol{\gamma}}},\stackrel{\star}{{\boldsymbol{p}}}\}= &
\underset{\boldsymbol{\gamma},\boldsymbol{p}}{\max}  \ F(\boldsymbol{\gamma},\boldsymbol{p}),\\
\text{s.t.}:      & \ C_1: {\gamma_{1,c}}+{\gamma_{2,c}}=1 \ \forall c \in \mathcal{C}_n,\\
& \ C_2: \gamma_{1,c} \le \gamma_{2,c} \ \forall c \in \mathcal{C}_n,\\
& \ C_3:  \gamma_{2,c}-\gamma_{1,c}\ge \frac{P_{\min}}{\zeta_{1,c}} \ \forall c \in \mathcal{C}_n.\\
\end{aligned}
\end{equation}
$ P1 $ must be done under the power allocation constraints $ C_1 $ and $ C_2 $ and the SIC constraint $ C_3 $ obtained in (\ref{eq:SIC_constraint}).

Problem $ P1 $ maximizes the system throughput by jointly optimizing intra- and inter-cluster PA. However, because the coupling of the power allocation factors from different UEs, then it is difficult to obtain the optimal solutions. To this end, we add a constraint that addresses this issue and then simplifies $ P1 $ by applying a fixed inter-cluster PA. Thus once $\beta$-UC is applied, the power allocated to each cluster $ \boldsymbol{p} $ will be well known and equal to constant values as detailed in Section \ref{ssec:fixedInterClusterPA}. Hence, $ P1 $ can be reformulated as follows:
\begin{equation}\label{pb:OptimizedPA_simplified} 
\begin{aligned}P1: \quad \stackrel{\star}{{\boldsymbol{\gamma}}}= &
\underset{\boldsymbol{\gamma}}{\max}  \ F(\boldsymbol{\gamma}),\\
\text{s.t.}:      & \ C_1: {\gamma_{1,c}}+{\gamma_{2,c}}=1 \ \forall c \in \mathcal{C}_n,\\
& \ C_2: \gamma_{1,c} \le \gamma_{2,c} \ \forall c \in \mathcal{C}_n,\\
& \ C_3:  \gamma_{2,c}-\gamma_{1,c}\ge \frac{P_{\min}}{\zeta_{1,c}} \ \forall c \in \mathcal{C}_n.
\end{aligned}
\end{equation}

For DBS cluster, as seen in (\ref{eq:sumRate_DBS_1c}), the throughput of UEs grouped in DBS-cluster is the same regardless of the intra-cluster PA method, i.e., $  F^{\text{DBS}}(\boldsymbol{\gamma}) = \text{constant} $ ($ \forall \boldsymbol{\gamma} $). However, for NOMA-DBS cluster, as seen in (\ref{eq:sumRate_NOMADBS_1c}) and (\ref{eq:sumRate_NOMADBS_2c}), the data rates of UE$ _{1,c} $ and UE$ _{2,c} $ only depend on their PA coefficients $ \gamma_{1,c} $ and $ \gamma_{2,c}=1-\gamma_{1,c}$. To this end, based on $ \beta$-UC, the PA scheme maximizing the system throughput is equivalent to that maximizing the throughput of each NOMA-DBS cluster separately. Accordingly, the maximization problem  $ P1 $ can be seen as an equivalent problem of $ C_n $ independent optimization problems denoted by $ P2^c $ for the $ c^{th} $ NOMA-DBS cluster, and can be equivalently formulated as follows:
\begin{equation}\label{pb:OPA_simplified} 
\begin{aligned} 
P2^c : \quad \stackrel{\star}{\gamma}_{1,c}= & \underset{\gamma_{1,c} }{\max} \ R_c= \underset{\gamma_{1,c} }{\max} \ R_{1,c}+R_{2,c} \ \forall c \in \mathcal{C}_n, \\
\text{s.t}:      & \ C_1: {\gamma_{1,c}}+{\gamma_{2,c}}=1 \ \forall c \in \mathcal{C}_n, \\
&\  C_2': \gamma_{1,c} \le \frac{1}{2},  \\
&\ C_3: \gamma_{2,c}-\gamma_{1,c}\ge  \frac{P_{\min}}{\zeta_{1,c}}.\\
\end{aligned}
\end{equation}
$C_1$ and $C_2$ in problem (\ref{pb:OptimizedPA}) can be reformulated as $C_2'$. Moreover, $C_1$, $C_2'$ and $C_3$ specify the feasible interval $ \Gamma_c $ of $ \gamma_{1,c} $, that is $ \Gamma_c = [0,\hat{\gamma}_c]$ with:
\begin{equation}
\hat{\gamma}_c = \frac{1}{2}\left( 1- \frac{P_{\min}}{\zeta_{1,c}}\right). 
\end{equation}
\subsection{Fixed Inter-Cluster Power Allocation}
\label{ssec:fixedInterClusterPA}
In our previous work \cite{khaled2020jointSDMA}, we applied a uniform power allocation per cluster, i.e., $ p_c = \text{constant} \quad (\forall c)$. We define $ P_c $ the emitted power toward the $ c $th cluster as follows:
\begin{equation}\label{eq:Pc_emittedPower}
P_c = \eta \lVert \mathbf{w}_c \rVert^2\sum_{l=1}^{K_c}\gamma_{l,1}p_c=\eta \lVert \mathbf{w}_c \rVert^2p_c.
\end{equation}
Moreover, $ \sum_{c=1}^{C}P_c=P_e $ where $ P_e $ is the total transmit power. This implies that the uniform power allocation per cluster satisfies $P_c=\frac{P_e}{C}$, using $ \eta \lVert \mathbf{w}_c \rVert^2=\frac{1}{C} $. However, in this study, we allocate a power $ p_c $ to the $ c $th cluster proportional to the number $ K_c $ of UEs served in the $ c $th cluster. Accordingly, based on (\ref{eq:Pc_emittedPower}), $ P_c $ is also proportional to $ K_c $. This is supposed to guarantee the power fairness among UEs. Therefore, we calculate $ P_c $ as follows: 
\begin{equation}\label{eq:Pc_noUniformPA}
P_c =K_c\frac{P_e}{K}.
\end{equation}
From (\ref{eq:Pc_emittedPower}) and (\ref{eq:Pc_noUniformPA}) we obtain that:
\begin{equation}\label{}
p_c=K_cC\frac{P_e}{K}.
\end{equation}
\subsection{Intra-Cluster Power Allocation Solution with Full CSI}
\label{ssec:OptimalSolutionFullCSI}
In this subsection, we propose an intra-cluster PA scheme, for the problem $ P2^c $ for each NOMA-DBS cluster, denoted as OPA, which is summarized in Algorithm \ref{alg:Algorithm1}.
\begin{algorithm}[!h]	
	\DontPrintSemicolon	
	\textbf{initialization:}
	$ c=C_n $;\\
	\While{$ c\le C $}
	{   
		\If{$ \left| \frac{\log_2(1+\zeta_{1,c})-\log_2(1+\zeta_{2,c})}{\log_2(1+\zeta_{1,c})}\right| < \epsilon $}{$\gamma_{1,c}= \tilde{\gamma}_{1,c}$}
		\ElseIf{$\zeta_{2,c}==\min(\zeta_{1,c},\zeta_{2,c}) $}{$\gamma_{1,c}=\hat{\gamma}_c$}
		\Else{$\gamma_{1,c}=0$}
		c=c+1;			
	}		
	\caption{Optimal Power Allocation (OPA)}
	\label{alg:Algorithm1}
\end{algorithm}
\begin{lemma}
	\label{lem:RcMonotone}
	$R_c$ is a real monotone function of $\gamma_{1,c}$, and  its variation depends on the sign of $\zeta_{1,c}-\zeta_{2,c}$.
\end{lemma}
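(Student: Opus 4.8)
The plan is to compute $R_c=R_{1,c}+R_{2,c}$ explicitly as a function of $\gamma_{1,c}$, differentiate once, and show that the derivative keeps a fixed sign across the feasible interval $\Gamma_c$. First I would substitute the expressions (\ref{eq:sumRate_NOMADBS_1c}) and (\ref{eq:sumRate_NOMADBS_2c}), i.e. $R_{1,c}=B\log_2(1+\zeta_{1,c}\gamma_{1,c})$ and $R_{2,c}=B\log_2\!\left(1+\frac{\zeta_{2,c}(1-\gamma_{1,c})}{1+\zeta_{2,c}\gamma_{1,c}}\right)$. The crucial first simplification is to put the argument of the logarithm in $R_{2,c}$ over the common denominator $1+\zeta_{2,c}\gamma_{1,c}$: the $\gamma_{1,c}$-terms in the numerator collapse, giving $\frac{1+\zeta_{2,c}}{1+\zeta_{2,c}\gamma_{1,c}}$, so that
\[
R_{2,c}=B\log_2(1+\zeta_{2,c})-B\log_2(1+\zeta_{2,c}\gamma_{1,c}).
\]
The first term is constant in $\gamma_{1,c}$, hence all the $\gamma_{1,c}$-dependence of $R_c$ sits in $B\log_2(1+\zeta_{1,c}\gamma_{1,c})-B\log_2(1+\zeta_{2,c}\gamma_{1,c})$.

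Next I would differentiate with respect to $\gamma_{1,c}$, obtaining
\[
\diff{R_c}{\gamma_{1,c}}=\frac{B}{\ln 2}\left(\frac{\zeta_{1,c}}{1+\zeta_{1,c}\gamma_{1,c}}-\frac{\zeta_{2,c}}{1+\zeta_{2,c}\gamma_{1,c}}\right).
\]
Combining the two fractions over the common denominator $(1+\zeta_{1,c}\gamma_{1,c})(1+\zeta_{2,c}\gamma_{1,c})$, the mixed terms $\pm\,\zeta_{1,c}\zeta_{2,c}\gamma_{1,c}$ cancel in the numerator, leaving
\[
\diff{R_c}{\gamma_{1,c}}=\frac{B}{\ln 2}\cdot\frac{\zeta_{1,c}-\zeta_{2,c}}{(1+\zeta_{1,c}\gamma_{1,c})(1+\zeta_{2,c}\gamma_{1,c})}.
\]

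To close the argument I would invoke that $\zeta_{1,c},\zeta_{2,c}>0$ and $\gamma_{1,c}\in\Gamma_c=[0,\hat{\gamma}_c]\subseteq[0,\tfrac12]$, so the denominator is strictly positive throughout the feasible interval. Consequently the sign of $\diff{R_c}{\gamma_{1,c}}$ equals the sign of $\zeta_{1,c}-\zeta_{2,c}$, which does not depend on $\gamma_{1,c}$. This proves that $R_c$ is monotone in $\gamma_{1,c}$: strictly increasing when $\zeta_{1,c}>\zeta_{2,c}$, strictly decreasing when $\zeta_{1,c}<\zeta_{2,c}$, and constant when $\zeta_{1,c}=\zeta_{2,c}$. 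The only real obstacle — more a point to watch than a genuine difficulty — is spotting the cancellation of the $\zeta_{1,c}\zeta_{2,c}\gamma_{1,c}$ cross-terms, which occurs twice (once in simplifying $R_{2,c}$, once in the derivative numerator); once this is noticed the dependence on $\mathrm{sgn}(\zeta_{1,c}-\zeta_{2,c})$ is immediate and the result follows with no further estimation.
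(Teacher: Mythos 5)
Your proof is correct and follows essentially the same route as the paper's Appendix 1: differentiate $R_{1,c}$ and $R_{2,c}$ separately and combine to obtain $\diff{R_c}{\gamma_{1,c}}=\frac{B}{\ln 2}\cdot\frac{\zeta_{1,c}-\zeta_{2,c}}{(1+\zeta_{1,c}\gamma_{1,c})(1+\zeta_{2,c}\gamma_{1,c})}$, whose sign is that of $\zeta_{1,c}-\zeta_{2,c}$. Your multiplicative constant $\tfrac{B}{\ln 2}$ is in fact the correct one (the paper writes $\ln(2)$ and drops $B$, a harmless typo since only the sign matters), and your explicit simplification of $R_{2,c}$ to $B\log_2(1+\zeta_{2,c})-B\log_2(1+\zeta_{2,c}\gamma_{1,c})$ is a clean way to organize the same computation.
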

\begin{proof}
	See Appendix 1.
\end{proof}

From Lemma \ref{lem:RcMonotone}, $ R_c $ is a real monotone function of $\gamma_{1,c}$, and its variation is as follows: 
\begin{equation}\label{eq:Rc_sign}
\left\lbrace 
\begin{aligned}
\zeta_{1,c}> \zeta_{2,c} \Longrightarrow R_{c} \ \text{is an increasing function of  } \ \gamma_{1,c}\\
\zeta_{1,c}< \zeta_{2,c} \Longrightarrow R_{c} \ \text{is an decreasing function of  } \ \gamma_{1,c}\\
\zeta_{1,c}= \zeta_{2,c} \Longrightarrow R_{c} \ \text{is constant  } \ \forall \gamma_{1,c}
\end{aligned}
\right. 
\end{equation}

For the three cases, and since UE$_{1,c}$ is the strong UE, i.e., $|\textbf{h}_{1,c}\textbf{w}_c|^2>|\textbf{h}_{2,c}\textbf{w}_c|^2$, this implies that $\psi_{1,c}>\psi_{2,c}$. If $\zeta_{1,c}<\zeta_{2,c}$, then $I_{:\rightarrow 1,c}>I_{:\rightarrow 2,c}$. And, UE$_{1,c}$ taken as a strong UE suffers from high other-clusters interference compared to UE$_{2,c}$. This implies that even though UE$_{1,c}$ has the highest received superimposed signal, the level of interference $ I_{:\rightarrow l,c} $ at each UE from other clusters determines the variation of the function $R_c$.

From Lemma \ref{lem:RcMonotone} and based on the derivation and analysis in Section \ref{ssec:OptimalSolutionFullCSI}, the optimal solution $ \stackrel{\star}{\gamma}_{1,c} $ of Problem $ P2^c $ is achieved at the end point of the feasible interval $ \Gamma_c $ of $ \gamma_{1,c} $. Indeed, if $ \zeta_{1,c}>\zeta_{2,c} $, i.e., $ R_c $ is an increasing function, then $ \stackrel{\star}{\gamma}_{1,c} = \hat{\gamma}_c $. The authors in \cite{ali2016dynamic} propose an optimal PA policy that maximizes the sum-throughput of a classical NOMA system (NOMA-SISO) by maximizing the sum-throughput per m-user NOMA cluster ($2 \le m \le K $). Using Karush-Kuhn-Tucker optimality conditions, they derived the optimal PA. We find that for 2-UE NOMA-SISO cluster (see Table 1 \cite{ali2016dynamic}), the optimal PA can be similar to our solution obtained without using Lagrange, by  changing variables. Otherwise, if $ \zeta_{1,c} < \zeta_{2,c} $, i.e., $ R_c $ is a decreasing function, then $ \stackrel{\star}{\gamma}_{1,c} = 0 $. This implies that OPA deactivates UE$ _{1,c} $, which is classified as strong UE, but actually suffers from high inter-cluster interference, by allocating all the power $ p_c $ to UE$ _{2,c} $. In addition, when $ |\zeta_{1,c}-\zeta_{2,c}|<\epsilon $, each value belonging to the interval $ \Gamma_c $ of $\gamma_{1,c}$ can be taken as an optimal solution. But, we choose $\tilde{\gamma}_{1,c}$ that achieves fairness between UEs in the $ c $th cluster, i.e., $ R_{2,c}(\tilde{\gamma}_{1,c})=R_{1,c}(\tilde{\gamma}_{1,c}) $, thus $ \stackrel{\star}{\gamma}_{1,c} = \tilde{\gamma}_{1,c} $ with 
\begin{equation}\label{eq:gamma_tiled}
\tilde{\gamma}_{1,c}=\frac{-(\zeta_{1,c}+\zeta_{2,c})+\sqrt{(\zeta_{1,c}+\zeta_{2,c})^2+4\zeta_{1,c}\zeta_{2,c}^2}}{2\zeta_{1,c}\zeta_{2,c}}.
\end{equation}
\begin{lemma}
	\label{lem:boundGamma1}
	When $ R_c $ is a constant function, then $ \tilde{\gamma}_{1,c} $ chosen as the optimal solution of Problem $ P2 $ satisfies the power allocation constraint $ C_1' $. 
\end{lemma}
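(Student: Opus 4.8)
The plan is to verify directly that the fairness point $\tilde{\gamma}_{1,c}$ falls inside the admissible region, specifically that it obeys the power-allocation constraint $C_1'$, i.e. $\gamma_{1,c}\le\tfrac12$ (the bound obtained from $C_1$ and $C_2$). First I would invoke Lemma~\ref{lem:RcMonotone}: by (\ref{eq:Rc_sign}), $R_c$ is a constant function of $\gamma_{1,c}$ precisely when $\zeta_{1,c}=\zeta_{2,c}$. Hence throughout the proof I may set $\zeta:=\zeta_{1,c}=\zeta_{2,c}>0$ and reduce everything to this single parameter.

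Next I would specialize the closed form (\ref{eq:gamma_tiled}). Substituting $\zeta_{1,c}=\zeta_{2,c}=\zeta$ collapses the radicand to $(\zeta_{1,c}+\zeta_{2,c})^2+4\zeta_{1,c}\zeta_{2,c}^2=4\zeta^2+4\zeta^3=4\zeta^2(1+\zeta)$, so that
$$\tilde{\gamma}_{1,c}=\frac{-2\zeta+2\zeta\sqrt{1+\zeta}}{2\zeta^2}=\frac{\sqrt{1+\zeta}-1}{\zeta}.$$
If one wishes to be self-contained, this form can instead be recovered from the defining fairness identity $R_{1,c}=R_{2,c}$, which after cancelling the logarithms reduces to the quadratic $\zeta_{1,c}\zeta_{2,c}\gamma_{1,c}^2+(\zeta_{1,c}+\zeta_{2,c})\gamma_{1,c}-\zeta_{2,c}=0$; the $+$ branch is the one yielding a nonnegative root. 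Since $\sqrt{1+\zeta}\ge 1$, it is immediate that $\tilde{\gamma}_{1,c}\ge 0$, so the lower endpoint of the feasible interval is automatically respected.

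The core estimate is then $\tilde{\gamma}_{1,c}\le\tfrac12$, equivalently $\sqrt{1+\zeta}-1\le\zeta/2$, i.e. $\sqrt{1+\zeta}\le 1+\zeta/2$. Both members being strictly positive, squaring is an equivalence and reduces the claim to $1+\zeta\le 1+\zeta+\zeta^2/4$, that is $0\le\zeta^2/4$, which holds for every $\zeta>0$ (with equality only in the degenerate limit $\zeta\to 0$). Therefore $\tilde{\gamma}_{1,c}\in[0,\tfrac12]$ and the power-allocation constraint $C_1'$ is satisfied.

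There is no genuine obstacle here; the proof is a short algebraic verification, and the only points demanding care are bookkeeping. I must pick the correct positive root when specializing (\ref{eq:gamma_tiled}), and must note that squaring $\sqrt{1+\zeta}\le 1+\zeta/2$ is legitimate as an equivalence only because both sides are nonnegative. It is also worth flagging that this argument certifies \emph{only} the power-allocation constraint $C_1'$: feasibility with respect to the SIC constraint $C_3$, namely $\tilde{\gamma}_{1,c}\le\hat{\gamma}_c$, is a separate question, since $\hat{\gamma}_c\le\tfrac12$ in general and is therefore not implied by the bound $\tilde{\gamma}_{1,c}\le\tfrac12$.
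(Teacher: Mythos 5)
Your proof is correct, and it reaches the same closed form $\tilde{\gamma}_{1,c}=\frac{\sqrt{1+\zeta}-1}{\zeta}$ as the paper's (\ref{eq:boundGamma1}), but the key bounding step is done by a genuinely different method. The paper treats $\tilde{\gamma}_{1,c}$ as a one-parameter family in $\zeta$, computes $\diff{\tilde{\gamma}_{1,c}}{\zeta}=\frac{-(\sqrt{\zeta+1}-1)^2}{2\zeta^2\sqrt{\zeta+1}}<0$ to establish monotonicity, and then evaluates the limits $\zeta\to 0$ and $\zeta\to\infty$ to pin the range down to $(0,\tfrac12)$. You instead prove the single algebraic inequality $\sqrt{1+\zeta}\le 1+\zeta/2$ by squaring (legitimate since both sides are positive), which reduces to $0\le\zeta^2/4$. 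Your route is shorter, avoids calculus entirely, and correctly identifies where equality would occur (the degenerate limit $\zeta\to 0$); the paper's route additionally yields the monotone dependence of $\tilde{\gamma}_{1,c}$ on $\zeta$, which is not needed for the lemma but is mildly informative. Your verification of the substitution into (\ref{eq:gamma_tiled}) (radicand collapsing to $4\zeta^2(1+\zeta)$) and your recovery of the quadratic $\zeta_{1,c}\zeta_{2,c}\gamma_{1,c}^2+(\zeta_{1,c}+\zeta_{2,c})\gamma_{1,c}-\zeta_{2,c}=0$ from the fairness identity are both accurate. Your closing remark is also well taken: the lemma as stated (and as proved in the paper) only certifies $C_1'$, and feasibility against the SIC bound $\hat{\gamma}_c=\tfrac12(1-P_{\min}/\zeta_{1,c})\le\tfrac12$ is a separate matter that neither argument addresses.
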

\begin{proof}
	As seen in (\ref{eq:Rc_sign}), $ R_c $ is a constant function if $\zeta_{1,c} = \zeta_{2,c} $. Therefore, for $ \zeta = \zeta_{1,c} \rightarrow  \zeta_{2,c} $, $ \tilde{\gamma}_{1,c} $ can be expressed as:
	\begin{equation}\label{eq:boundGamma1}
	\tilde{\gamma}_{1,c} = \frac{-1+\sqrt{1+\zeta}}{\zeta}.
	\end{equation}
	Applying the first-order derivative of (\ref{eq:boundGamma1}), we obtain that:
	\begin{equation}\label{eq:tiled_gamma_diff}
	\diff{\tilde{\gamma}_{1,c}}{\zeta}=\frac{-\zeta-2+2\sqrt{\zeta+1}}{2\zeta^2\sqrt{\zeta+1}}=\frac{-(\sqrt{\zeta+1}-1)^2}{2\zeta^2\sqrt{\zeta+1}}.
	\end{equation}
	From (\ref{eq:tiled_gamma_diff}), we find that $ \diff{\tilde{\gamma}_{1,c}}{\zeta} <0$ for $\zeta>0$, and thus $ \tilde{\gamma}_{1,c} $ in (\ref{eq:boundGamma1}) is a decreasing function.  This implies that if $ 0<\zeta_{\min}\le \zeta \le\zeta_{\max}$, then $ \tilde{\gamma}_{1,c}(\zeta_{\max}) \le\tilde{\gamma}_{1,c} \le\tilde{\gamma}_{1,c}(\zeta_{\min})$. 
	\begin{equation}\label{eq:boundGamma1_approx}
	\left.
	\begin{aligned}
	\tilde{\gamma}_{1,c} \rightarrow \frac{-1+\sqrt(1+\zeta)}{\zeta}
	\end{aligned}
	\right\}
	\Longrightarrow
	\left\lbrace 
	\begin{aligned}
	\lim_{\zeta \to 0}\tilde{\gamma}_{1,c}= \lim_{\zeta \to 0} \frac{-1+(1+\frac{1}{2}\zeta)}{\zeta}=\frac{1}{2}\\
	\lim_{\zeta \to \infty}\tilde{\gamma}_{1,c} = \lim_{\zeta \to \infty}\frac{1}{1+\sqrt{1+\zeta}} = 0 
	\end{aligned} 
	\right\}
	\Longrightarrow
	\left\lbrace 
	\begin{aligned}
	0<\gamma_{1,c} < \frac{1}{2} 
	\end{aligned}
	\right. 	
	\end{equation}	
	Accordingly, based on (\ref{eq:boundGamma1_approx}), $ \tilde{\gamma}_{1,c}  $ lies on $ (0,\frac{1}{2}) $ for $\zeta>0$. Here, Lemma \ref{lem:boundGamma1} is proved.
\end{proof}

\begin{remark}
When $ R_c $ is a slowly-increasing or slowly-decreasing function, the absolute difference between $ R_c $ at $ \gamma_{1,c}=0 $ and $ R_c $ at $ \gamma_{1,c}=1 $ is very small, i.e., $|R_c(1)-R_c(0)|= |\log_2(1+\zeta_{1,c})-\log_2(1+\zeta_{2,c})|< \epsilon $. However, since $\gamma_{1,c} \in \Gamma_c =[0,\hat{\gamma}_c] \subset [0,1] $, then $|\log_2(1+\zeta_{1,c})-\log_2(1+\zeta_{2,c})|< \epsilon$ is also valid in the feasible interval $\Gamma_c $ of $ \gamma_{1,c} $, i.e., $ |R_c(\hat{\gamma}_c )-R_c(0)|< \epsilon $. Therefore, we use $|\log_2(1+\zeta_{1,c})-\log_2(1+\zeta_{2,c})|< \epsilon$ instead of $|R_c(\hat{\gamma}_c)-R_c(0)|< \epsilon  $, which is more complicated to compute. Moreover, the value of $\epsilon$ depends on the value of $ \log_2(1+\zeta_{l,c}) $. To this end, we use the relative difference of the corresponding data rates $ \left| \frac{\log_2(1+\zeta_{1,c})-\log_2(1+\zeta_{2,c})}{\log_2(1+\zeta_{1,c})}\right| < \epsilon $ instead of $ |\zeta_{1,c}-\zeta_{2,c}|<\epsilon $. 
\end{remark}
\section{Partial CSI-based Power Allocation Scheme}
\label{sec:PartialCSI}
Using classical DBS, UEs only feedback their estimated angles, i.e.,  $ \vec{\Theta}_{1,k}=(\theta_{1,k},\phi_{1,k}) $ $ (\forall k \in \mathcal{K}) $. Accordingly, we design a NOMA-DBS scheme that uses these angles to schedule UEs between DBS and NOMA-DBS clusters. Subsequently, in Section \ref{sec:FullCSI}, we propose an intra-cluster PA scheme for NOMA-DBS, denoted as OPA, that maximizes the system throughput under full CSI. This implies that NOMA-DBS misses the key features of DBS, namely partial CSI and low channel overhead. To this end, we will formulate an alternative problem from $ P2^c $, by defining a partial CSI-based function, to guarantee low feedback overhead. First, we will try to rewrite $ \text{SINR}_{l,c}^{\text{NOMA-DBS}} $ of UE$_{l,c}$ served in NOMA-DBS cluster, by exploiting the impacts of the NLOS and LOS paths on both received power and interference at UE$_{l,c}$. Then, we will rewrite the system throughput to conclude a partial CSI-based geometric function. 

Based on (\ref{eq:channelvector}), $ |\textbf{h}^H_{l,c}\textbf{a}_c|^2 $ can be expressed as:

\begin{subequations}\label{eq:channelsteering}
	\begin{align}
	|\textbf{h}^H_{l,c}\textbf{a}_c|^2=&\bigg|\sum_{n=1}^{N_{l,c}} \alpha_{n,l,c}\textbf{a}^H_{n,l,c}\textbf{a}_c\bigg|^2
	\end{align}
	\begin{align}
	\label{eq:channelsteering1}
	= &\bigg|\alpha_{1,l,c}\bigg( \textbf{a}^H_{1,l,c}\textbf{a}_c+\sum_{n=2}^{N_{l,c}} \frac{\alpha_{n,l,c}}{\alpha_{1,l,c}}\textbf{a}^H_{n,l,c}\textbf{a}_c\bigg) \bigg|^2.
	\end{align}
\end{subequations}
In (\ref{eq:channelsteering1}), the first term represents the spatial interference of the $ c $th cluster with the LOS path in UE$_{l,c}$'s channel. And, the second term represents the spatial interference of the $ c $th cluster, with all the NLOS paths in UE$_{l,c}$'s channel.

In high signal-to-noise ratio (SNR) regime and using (\ref{eq:channelsteering}), $ R_{1,c} $ in (\ref{eq:sinr_noma_1}) and $ R_{2,c}  $ in (\ref{eq:sinr_noma_2}) are given by (\ref{eq:R1c_highSNR}) and (\ref{eq:R2c_highSNR}), respectively:
	\begin{equation}\label{eq:R1c_highSNR}
R_{1,c} \overset{\text{high SNR}}{\longrightarrow} B\log_2\bigg(1+ \frac{\gamma_{1,c}p_c| \textbf{a}^H_{1,1,c}\textbf{a}_c+\sum_{n=2}^{N_{1,c}} \frac{\alpha_{n,1,c}}{\alpha_{1,1,c}}\textbf{a}^H_{n,1,c}\textbf{a}_c|^2}{\mathlarger{\sum}_{b \in \mathcal{C}, b\neq c } p_b|\textbf{a}^H_{1,1,c}\textbf{a}_b+\sum_{n=2}^{N_{1,c}} \frac{\alpha_{n,1,c}}{\alpha_{1,1,c}}\textbf{a}^H_{n,1,c}\textbf{a}_{b}|^2}\bigg)
\end{equation}
\begin{equation}
\label{eq:R2c_highSNR}
R_{2,c} \overset{\text{high SNR}}{\longrightarrow} B\log_2\bigg(1+ \frac{\gamma_{2,c}p_c\left|  \textbf{a}^H_{1,2,c}\textbf{a}_c+\sum_{n=2}^{N_{2,c}} \frac{\alpha_{n,2,c}}{\alpha_{1,2,c}}\textbf{a}^H_{n,2,c}\textbf{a}_c\right| ^2}{\mathlarger{\gamma_{1,c}p_c\left|  \textbf{a}^H_{1,2,c}\textbf{a}_c+\sum_{n=2}^{N_{2,c}} \frac{\alpha_{n,2,c}}{\alpha_{1,2,c}}\textbf{a}^H_{n,2,c}\textbf{a}_c\right| ^2+\sum}_{b \in \mathcal{C}, b\neq c} p_b\left| \textbf{a}^H_{1,2,c}\textbf{a}_b+\sum_{n=2}^{N_{2,c}} \frac{\alpha_{n,2,c}}{\alpha_{1,2,c}}\textbf{a}^H_{n,2,c}\textbf{a}_{b}\right| ^2}\bigg)
\end{equation}

As seen in (\ref{eq:R1c_highSNR}) and (\ref{eq:R2c_highSNR}), the NLOS paths affect either the received power or the interference. Indeed, the interference $I_{:\rightarrow l,c},$  $ (l={1,2})$ is due to the beams of other UEs that interferes with LOS and NLOS paths. In addition, the received power at UE$_{l,c}$ is that transmitted by the beam interfered with the LOS path plus the effect of paths in the vicinity of the beam direction. 

Assuming that this NLOS paths' impact is negligible, i.e., $ \frac{\alpha_{n,l,c}}{\alpha_{1,l,c}} $ is negligible, thus the data rates $\breve{R}_{l,c}$ at UE$_{l,c}$ $ (l={1,2})$ can be approximated as follows:
\begin{equation}\label{eq:R1c_highSNR_approx}
\breve{R}_{1,c}= B\log_2\bigg(1+ \frac{\gamma_{1,c}p_c| \textbf{a}^H_{1,1,c}\textbf{a}_c|^2}{\mathlarger{\sum}_{b \in \mathcal{C}, b\neq c} p_b|\textbf{a}^H_{1,1,c}\textbf{a}_b|^2}\bigg),
\end{equation}
\begin{equation}
\label{eq:R2c_highSNR_approx}
\breve{R}_{2,c} = B\log_2\bigg(1+ \frac{\gamma_{2,c}p_c| \textbf{a}^H_{1,2,c}\textbf{a}_c|^2}{\mathlarger{\gamma_{1,c}p_c| \textbf{a}^H_{1,2,c}\textbf{a}_c|^2+\sum}_{b \in \mathcal{C}, b\neq c} p_b|\textbf{a}^H_{1,2,c}\textbf{a}_b|^2}\bigg).
\end{equation}
As seen in (\ref{eq:R1c_highSNR_approx}) and (\ref{eq:R2c_highSNR_approx}), $\breve{R}_{l,c}$ depends on the LOS path angle information, i.e., $ \vec{\Theta}_{1,l,c} $, the $ \beta $-UC, i.e., $\vec{\Theta}_c$, and the applied PA method, i.e., $\gamma_{1,c}$ and $\gamma_{2,c}$.
The obtained partial CSI-based function $\breve{R}_{l,c}$ corresponds to the data rate at UE$ _{l,c} $ in case of mono-path environment, where the received power just comes from the LOS path and the interference comes from the other-clusters with the LOS path. However, DBS is a LOS path angle-based beamformer that generates a single beam toward each UE. Thus, in multi-path environment, DBS loses the potentiality of the NLOS paths and some of the radiated energy, as obtained in \cite{Israakhaled2020WSA}. To guarantee the key feature of DBS, we formulate a partial CSI-based function that misses the potentiality of the NLOS paths. Accordingly, the problem $ P3^c $ from $ P2^c $ consists to find the PA coefficients that maximize the partial CSI function $\breve{R}_{c}$ instead of the NOMA-DBS cluster's throughput $ {R}_{c} $, as follows: 
\begin{equation}\label{pb:OptimizedPA3} 
\begin{aligned} 
P3^c: \quad \stackrel{\star}{\gamma}_{1,c}= &\underset{\gamma_{1,c} }{\max} \ \breve{R}_c= \underset{\gamma_{1,c} }{\max} \ \breve{R}_{1,c}+\breve{R}_{2,c}; \ \forall c \in \mathcal{C}_n,\\
\text{s.t}:&\ C_1:{\gamma_{1,c}}+{\gamma_{2,c}}=1 \ \forall c \in \mathcal{C}_n, \\
&\ C_2':\gamma_{1,c}\le\frac{1}{2},\\
&\ C_3:\gamma_{2,c}-\gamma_{1,c}\ge  \frac{P_{\min}}{\breve{\zeta}_{1,c}}.
\end{aligned}
\end{equation}

This Problem $ P3^c$ will be solved in the same manner as Problem $ P2^c $, but with $\breve{\zeta}_{l,c}=\frac{\breve{\psi}_{l,c}}{\breve{\nu}_{l,c}}$, where:
\begin{equation}\label{}
\breve{\psi}_{l,c}=\eta p_c|\textbf{a}^H_{1,l,c}\textbf{a}_c|^2, \quad l \in \mathcal{K}_c, c \in \mathcal{C}_n,
\end{equation}
\begin{equation}\label{}
\breve{\nu}_{l,c}=\mathlarger{\sum}_{b \in \mathcal{C}, b\neq c }\eta p_b|\textbf{a}^H_{1,l,c}\textbf{a}_{b}|^2, \quad l \in \mathcal{K}_c, c \in \mathcal{C}_n.
\end{equation}
\section{Performance Evaluation}
\label{sec:simulationresults}

In this section, numerical results are presented to verify the performance of the proposed NOMA-DBS scheme. All results are obtained by averaging over $ 5000 $ random trials. Perfect estimation of angles at the receivers and perfect instantaneous feedback of the estimated angles are assumed in the analysis of such systems. We set $ \beta_0 =0.5 $ as the spatial interference threshold. Other system and channel parameters used for performance assessment are listed in Table \ref{tab:sumparam}. 
\begin{table}[!h]
	\centering
	\caption{ Simulation parameters}
	\begin{tabular}{ p{8.75cm}|p{2.25cm}}
		\hline 
		\textbf{System and Channel Parameters} & \textbf{Values}\\
		\hline  
		Number of transmit antennas & 64\\
		Carrier frequency & 28 GHz\\
		Channel bandwidth & 20 MHz\\
		Cell edge radius & 100 m\\
		Transmission power & 30 dBm\\
		Noise power & -100.9178 dBm\\
		Minimum power difference $P_{\min}$ & 1 mW\\
		Number of paths per time cluster in rural scenario & $ \{1,2\} $\\
		\hline   
	\end{tabular} 
	\label{tab:sumparam}
\end{table}

The proposed schemes with full and  partial CSI are labeled, respectively, as "$\beta$-UC NOMA-DBS-FCSI" and "$\beta$-UC NOMA-DBS-PCSI". For comparison, the orthogonal multiple access (OMA) with DBS (OMA-DBS) is adopted as the baseline, in which the two UEs in NOMA-DBS cluster obtained using $\beta$-UC algorithm are served using OMA with equal degrees of freedom for each UE, which is labeled as "$\beta$-UC OMA-DBS". Moreover, a widely used beamformer, namely maximum rate ratio, also known as conjugate beamforming (CB) \cite{Yang2013} is adopted to make a comparison.
The number of real coefficients needed per channel estimation for the different systems is summarized in Table \ref{tab:channelFeedback}.
\begin{table}[!h]
	\centering
	\caption{ Channel Overhead}
	\begin{tabular}{ p{5.5cm}|p{7.5cm}}
		\hline 
		\textbf{System } & \textbf{Number of real coefficients per channel estimation for each UE}\\
		\hline  
		Classical DBS  & 2 $  [\theta_{1,k},\phi_{1,k}] $\\
		MRT & $ 2\times M $ [$ \textbf{H} $] \\
		$\beta$-UC NOMA-DBS-FCSI & $ 2\times (M+1) $   [$ \textbf{H}+(\theta_{1,k},\phi_{1,k}) $]\\
		$\beta$-UC NOMA-DBS-PCSI & 3 [$ |\textbf{h}_{l,c}\textbf{w}_{c}|^2$ + $ (\theta_{1,k},\phi_{1,k}) $] \\
		$\beta$-UC OMA-DB & $ 2\times M $ [$ \textbf{H} $]\\
		\hline   
	\end{tabular} 
	\label{tab:channelFeedback}
\end{table} 
\subsection{Spectral Efficiency}
Fig. \ref{fig:sumrate_UPA_RMa} depicts the spectral  efficiency (or the achievable sum-rate per the normalized bandwidth) versus the total number of users, in rural environment, when BS is equipped using UPA with $ M_H=32 $.
\begin{figure}[h!]
	\centering
	\includegraphics[scale=1]{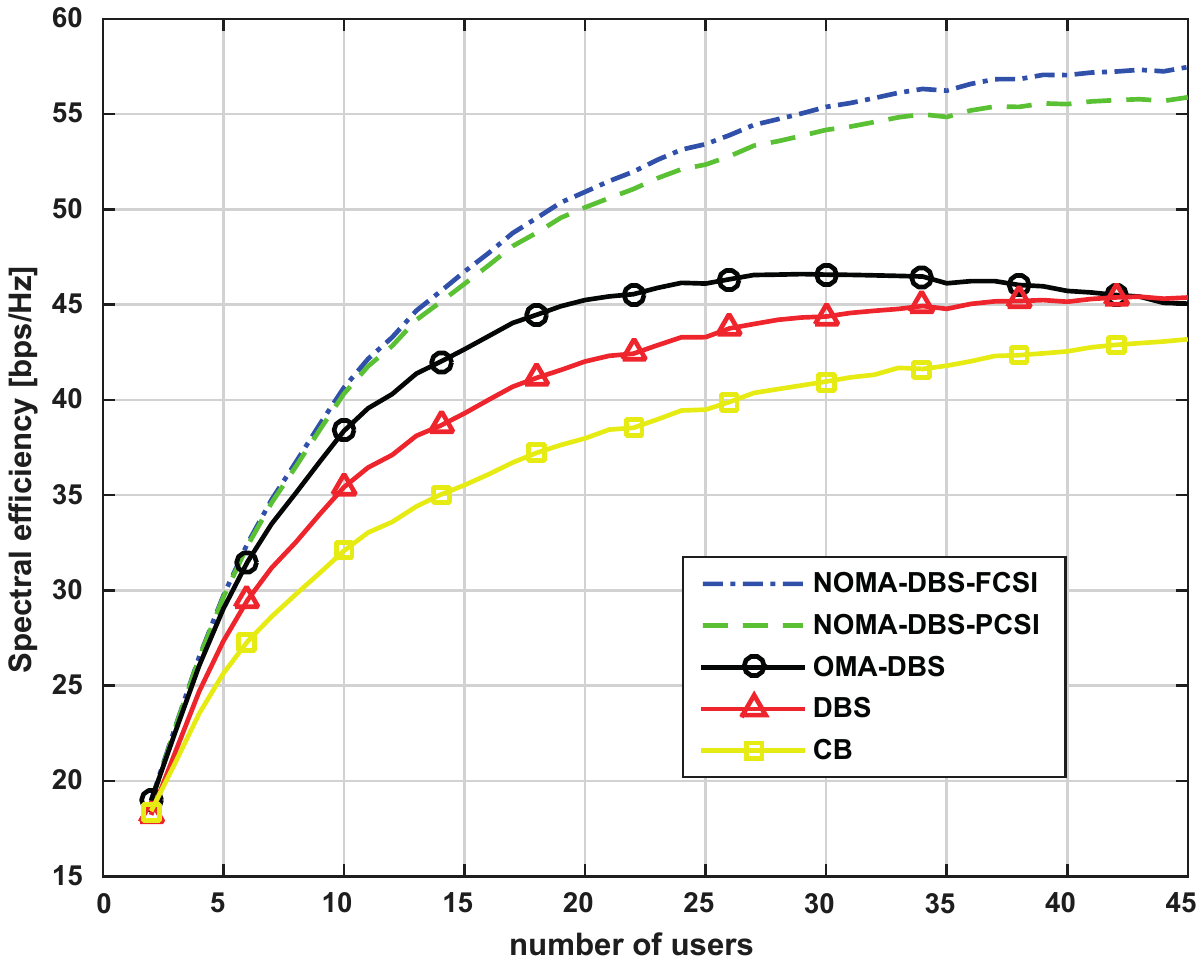}
	\caption{Spectral  efficiency versus number of users, for NOMA-DBS, OMA-DBS, classical DBS and CB in rural environment, when BS is equipped with UPA.}
	\label{fig:sumrate_UPA_RMa}
\end{figure}

It can be seen that the proposed NOMA-DBS scheme with full and partial CSI significantly improves the performance of classical DBS. Fortunately, attractive results are achieved, where NOMA-DBS-PCSI can almost achieve the same performance as  NOMA-DBS-FCSI, with much lower channel feedback; $ 43.3 $ times less as indicated in Table \ref{tab:channelFeedback}. Indeed, one or two paths exist in rural environment (see Table \ref{tab:sumparam}), and thus the partial CSI-based function in NOMA-DBS-PCSI scheme is approximately the same as the throughput. For instance, NOMA-DBS-PCSI achieves higher spectral  efficiency with respect to DBS up to $ 26.8 \% $ for 45 users in the cell using UPA. 

As seen in Fig. \ref{fig:sumrate_UPA_RMa}, it is obvious that classical DBS outperforms CB in rural environment, thanks to a few paths as explained in our previous work \cite{Israakhaled2020WSA}. Besides, NOMA-DBS improves the performance of DBS and thus surpasses CB. For instance, the performance gain in spectral  efficiency for NOMA-DBS-PCSI against CB is approximately $ 29.4\% $ for 45 users in the cell using UPA, with lower complexity and much lower channel overhead ($ 42.6 $ times less as shown in Table \ref{tab:channelFeedback}). 

It can be seen from Fig. \ref{fig:sumrate_UPA_RMa} that NOMA-DBS surpasses the   conventional OMA-DBS schemes in rural environment, and using UPA. This is also obtained in the prior work comparing NOMA-MIMO with OMA-MIMO, which concludes that OMA-MIMO can not support more users, while NOMA-MIMO enables a higher number of successfully connected users. Indeed, NOMA with the aid of SIC and SC enables more UEs in each beam to be supported at the same spectrum efficiency, while OMA serves only one UE in each beam.  
\subsection{Energy Efficiency}
The energy efficiency ($ \text{EE} $) is also a key performance metric used for future cellular network, because of environmental and economic interests. $\text{EE}$ is defined as the ratio of the spectral efficiency $ [bps] $ to the total power consumed at BS:
\begin{equation}\label{}
\text{EE}=\frac{B\sum_{k=1}^{K}R_k}{\rho \sum_{k=1}^{K}\lVert \mathbf{w}_k \rVert^2+MP_a+P_0} \quad [bps/J].
\end{equation}
where $ \rho \ge 1 $ is a constant modeling the inefficiency of the power amplifier, $ P_a $ is the constant power consumption per antenna independent of the transmitted power, and $P_0 $ is the basic power consumed at the base station independent of the number of antennas. Assume $ \rho=10 $, $ P_a=1W $ and $P_0=200mW  $.
\begin{figure}[h!]
	\centering
	\includegraphics[scale=1]{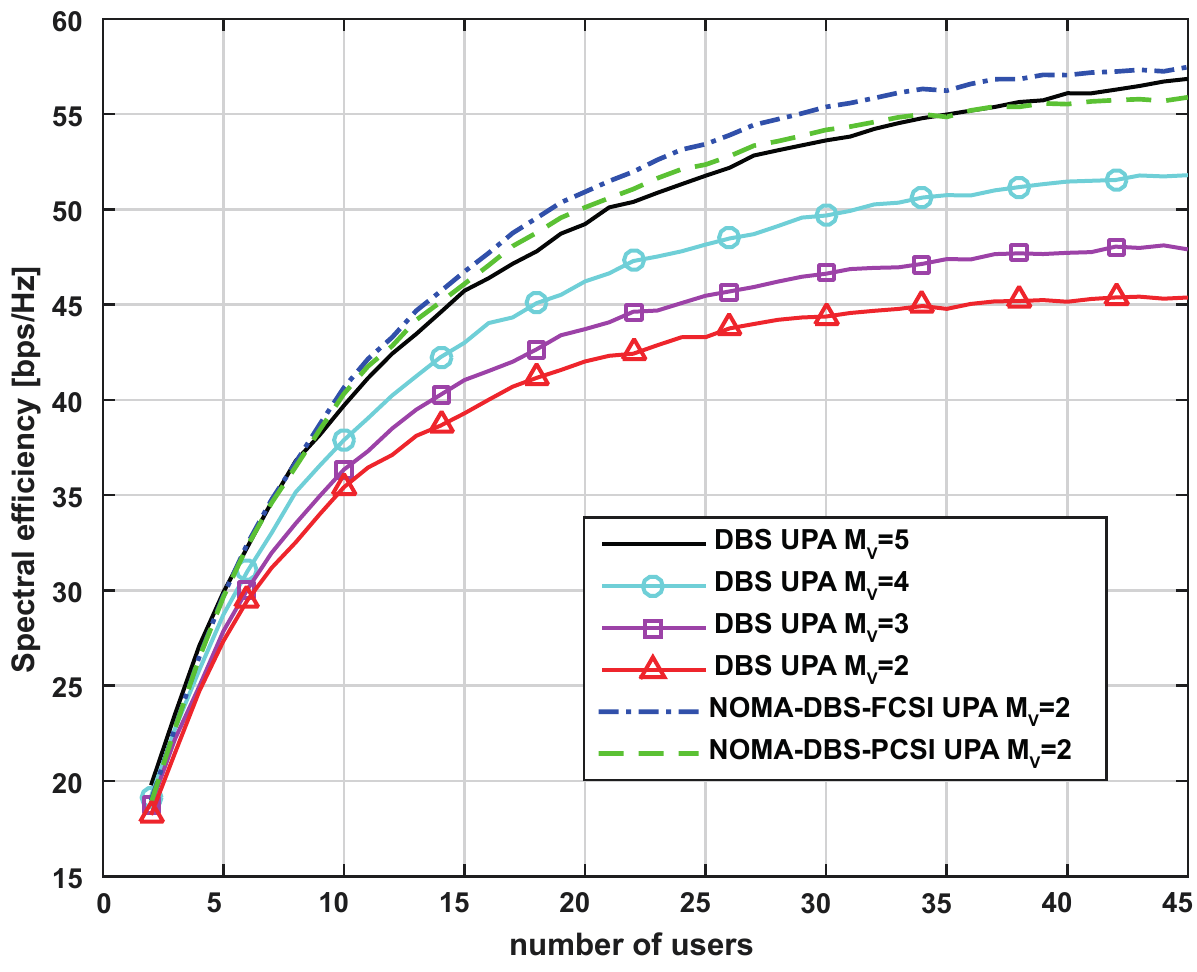}
	\caption{Spectral  efficiency versus number of users, for NOMA-DBS using UPA with $ M_H=32 $ and $ M_V=2 $, and for classical DBS using UPA with $ M_H=32 $ and different values of $ M_V $ ($ M_V=\{2,3,4,5 \}$) in rural environment.}
	\label{fig:sumrate_UPA32_RMa}
\end{figure} 
\begin{figure}[h!]
	\centering
	\includegraphics[scale=1]{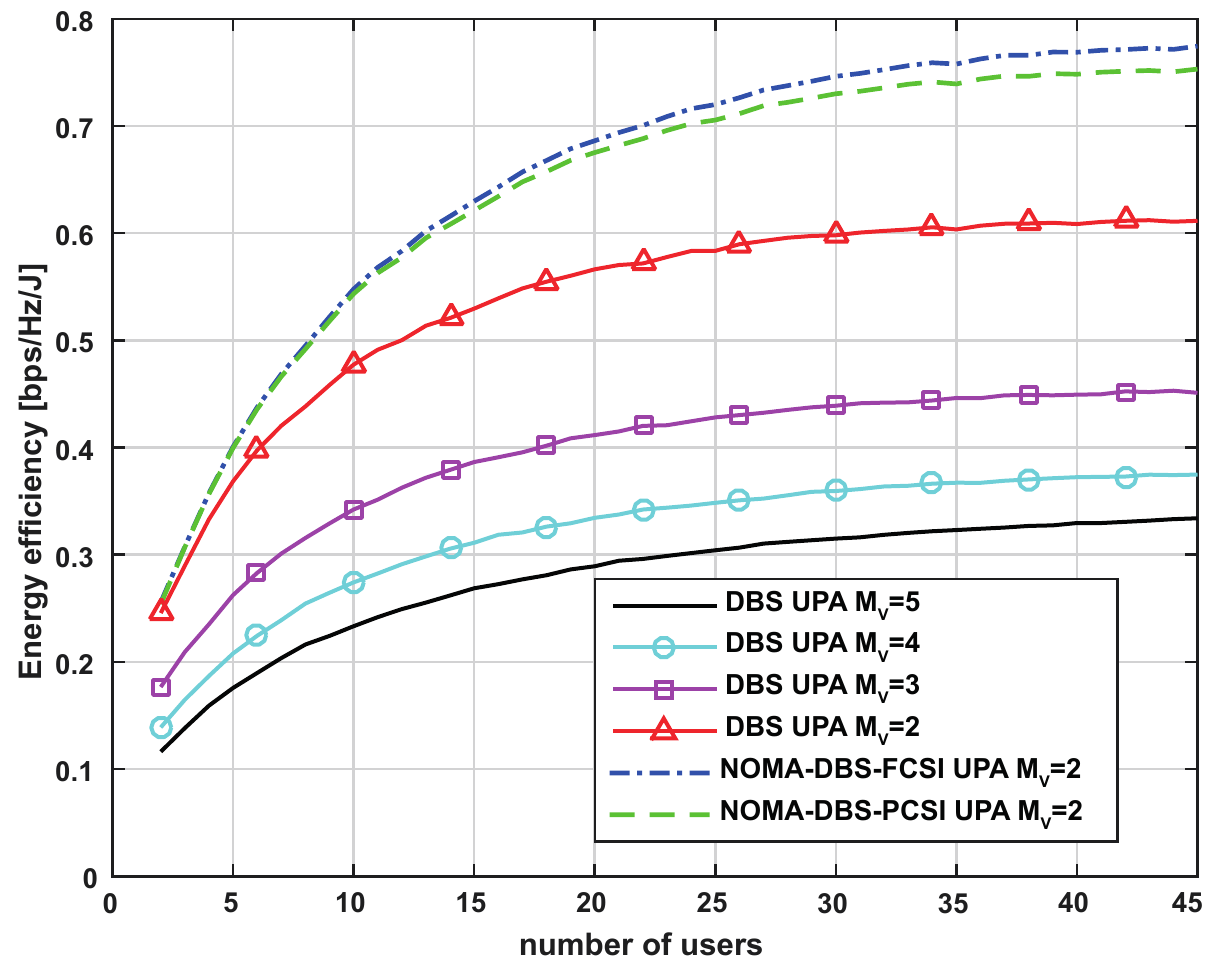}
	\caption{Energy efficiency versus number of users, for NOMA-DBS using UPA with $ M_H=32 $ and $ M_V=2 $, and for classical DBS using UPA with $ M_H=32 $ and different values of $ M_V $ ($ M_V=\{2,3,4,5 \}$) in rural environment.}
	\label{fig:EE_UPA32_RMa}
\end{figure}

Fig. \ref{fig:sumrate_UPA32_RMa} and \ref{fig:EE_UPA32_RMa} depict, respectively, the spectral  efficiency and energy efficiency versus number of users, for NOMA-DBS using UPA with $ M_H=32 $ and $ M_V=2 $, and for classical DBS using UPA with $ M_H=32 $ and different values of $ M_V $ ($ M_V=\{2,3,4,5 \}$) in rural environment. 

In rural environment, using $ 96 $ less of antennas, our proposed schemes NOMA-DBS-FCSI and NOMA-DBS-PCSI with $ M_H=32 $ and $ M_V=2 $ have almost the same performance of the classical DBS with $ M_H=32 $ and $ M_V=5 $ in terms of spectrum efficiency, as shown in Fig. \ref{fig:sumrate_UPA32_RMa}. It's obvious therefore that the energy efficiency using NOMA-DBS is $ 2.2 $ times greater than that using classical DBS with $ M_H=32 $ and $ M_V=5 $, as seen in Fig. \ref{fig:EE_UPA32_RMa}. Accordingly, NOMA-DBS is an alternative solution to enhance the DBS performance without increasing the number of antennas.
\section{Conclusion}  
\label{sec:Conclusion}
In this paper, we integrate DBS with PD-NOMA transmission for mmWave massive MIMO systems in order to boost the DBS performance without increasing the number of antennas. We first define a geometric interference metric, so that the proposed NOMA-DBS scheme regrouped the users based on their directions. Subsequently, by applying a fixed inter-cluster PA, we derive an intra-cluster PA for 2-user NOMA-DBS system that maximizes the sum-throughput. More importantly, a new partial CSI-based geometric function is defined by exploiting the spatial behavior of DBS, to design a PA scheme with low-feedback rate. Simulation results using NYUSIM indicates that our proposed low-complex NOMA-DBS system with limited feedback yields a significant spectral efficiency improvements with respect to DBS, and can be an alternative solution to the use of a large number of antennas. However, our proposed scheme is limited to 2-user in NOMA-DBS cluster. It is in our interest to extend 2-user NOMA-DBS to support multiple users, to further enhance the NOMA-DBS scheme. 

\section{Appendix}
\label{ap:derivative_Rc}
\subsection{Appendix 1}
The first-order derivative of $ R_{1,c} $, $ R_{2,c} $ and $R_c$ can be expressed as:
\begin{equation}\label{}
\diff{R_{1,c}}{\gamma_{1,c}}=\ln(2)\frac{\psi_{1,c}}{\nu_{1,c}+\psi_{1,c}\gamma_{1,c}}>0,
\end{equation}
\begin{equation}\label{}
\diff{R_{2,c}}{\gamma_{1,c}}=-\ln(2)\frac{\psi_{2,c}}{\nu_{2,c}+\psi_{2,c}\gamma_{1,c}}<0,
\end{equation}
\begin{subequations} \label{eq:Rc_derivative}
	\begin{align}
	\diff{R_{c}}{\gamma_{1,c}}&=\diff{R_{1,c}}{\gamma_{1,c}}+\diff{R_{2,c}}{\gamma_{1,c}}
	\end{align}	
	\begin{align}
	&=\ln(2)\frac{\zeta_{1,c}-\zeta_{2,c}}{(1+\zeta_{1,c}\gamma_{1,c})(1+\zeta_{2,c}\gamma_{1,c})}.
	\end{align}		
\end{subequations}
(\ref{eq:Rc_derivative}) indicates that the sign of the first-order derivative of  $R_c$ is the same of that of $\zeta_{1,c}-\zeta_{2,c}$. Here, Lemma \ref{lem:RcMonotone} is proved.

\bibliographystyle{iet}
\bibliography{Bibliography}	

\begin{thebibliography}{10}

\bibitem{busari2017millimeter}
Busari, S.A., Huq, K.M.S., Mumtaz, S., Dai, L., Rodriguez, J.: `Millimeter-wave
  massive {MIMO} communication for future wireless systems: A survey',
  \emph{IEEE Communications Surveys \& Tutorials},  2017, \textbf{20}, (2),
  pp.~836--869

\bibitem{araujo2016massive}
Ara{\'u}jo, D.C., Maksymyuk, T., de~Almeida, A.L., Maciel, T., Mota, J.C., Jo,
  M.: `Massive {MIMO}: survey and future research topics', \emph{IET
  Communications},  2016, \textbf{10}, (15), pp.~1938--1946

\bibitem{dai2015non}
Dai, L., Wang, B., Yuan, Y., Han, S., Chih.Lin, I., Wang, Z.: `Non-orthogonal
  multiple access for {5G}: solutions, challenges, opportunities, and future
  research trends', \emph{IEEE Commun Mag},  2015, \textbf{53}, (9), pp.~74--81

\bibitem{zhang2017capacity}
Zhang, D., Zhou, Z., Xu, C., Zhang, Y., Rodriguez, J., Sato, T.: `Capacity
  analysis of {NOMA} with {mmWave} massive {MIMO} systems', \emph{IEEE J Sel
  Areas Commun},  2017, \textbf{35}, (7), pp.~1606--1618

\bibitem{benjebbovu2013system}
Benjebbovu, A., Li, A., Saito, Y., Kishiyama, Y., Harada, A., Nakamura, T.
\newblock `System-level performance of downlink {NOMA} for future {LTE}
  enhancements'.
\newblock In: Proc.~IEEE Globecom Workshops (GC Wkshps). (IEEE,  2013. pp.~
  66--70

\bibitem{zeng2017capacity}
Zeng, M., Yadav, A., Dobre, O.A., Tsiropoulos, G.I., Poor, H.V.: `Capacity
  comparison between {MIMO-NOMA} and {MIMO-OMA} with multiple users in a
  cluster', \emph{IEEE J~Sel~Areas in Commun},  2017, \textbf{35}, (10),
  pp.~2413--2424

\bibitem{choi2015minimum}
Choi, J.: `Minimum power multicast beamforming with superposition coding for
  multiresolution broadcast and application to {NOMA} systems', \emph{IEEE
  Transactions on Communications},  2015, \textbf{63}, (3), pp.~791--800

\bibitem{xiao2018joint}
Xiao, Z., Zhu, L., Choi, J., Xia, P., Xia, X.G.: `Joint power allocation and
  beamforming for non-orthogonal multiple access {(NOMA)} in {5G} millimeter
  wave communications', \emph{IEEE Transactions on Wireless Communications},
  2018, \textbf{17}, (5), pp.~2961--2974

\bibitem{wang2017spectrum}
Wang, B., Dai, L., Wang, Z., Ge, N., Zhou, S.: `Spectrum and energy-efficient
  beamspace {MIMO-NOMA} for millimeter-wave communications using lens antenna
  array', \emph{IEEE Journal on Selected Areas in Communications},  2017,
  \textbf{35}, (10), pp.~2370--2382

\bibitem{ding2017random}
Ding, Z., Fan, P., Poor, H.V.: `Random beamforming in millimeter-wave {NOMA}
  networks', \emph{IEEE access},  2017, \textbf{5}, pp.~7667--7681

\bibitem{sun2015ergodic}
Sun, Q., Han, S., Chin.Lin, I., Pan, Z.: `On the ergodic capacity of {MIMO
  NOMA} systems', \emph{IEEE Wireless Communications Letters},  2015,
  \textbf{4}, (4), pp.~405--408

\bibitem{ding2016design}
Ding, Z., Poor, H.V.: `Design of massive-{MIMO-NOMA} with limited feedback',
  \emph{IEEE Signal Process Lett},  2016, \textbf{23}, (5), pp.~629--633

\bibitem{Roze2015}
Roz{\'e}, A., H{\'e}lard, M., Crussi{\`e}re, M., Langlais, C.
\newblock `Millimeter-wave digital beamsteering in highly line-of-sight
  environments for massive {MIMO} systems'.
\newblock In: Wireless World Research Forum Meeting. vol.~35. (,  2015.

\bibitem{khaled2019performance}
Khaled, I., {EL Falou}, A., Langlais, C., ELHassan, B., Jezequel, M.
\newblock `Performance evaluation of linear precoding mmwave multi-user {MIMO}
  systems with {NYUSIM} channel simulator'.
\newblock In: Proc.~IEEE Middle East and North Africa COMMunications Conference
  (MENACOMM). (IEEE,  2019. pp.~ 1--6

\bibitem{Israakhaled2020WSA}
Khaled, I., {El Falou}, A., Langlais, C., ELHassan, B., Jezequel, M.
\newblock `Multi-user digital beamforming based on path angle information for
  {mm-Wave} {MIMO} systems'.
\newblock In: Proc.~International ITG Workshop on Smart Antennas (WSA). (VDE,
  2020. pp.~ 1--6

\bibitem{khaled2020jointSDMA}
Khaled, I., Langlais, C., El.Falou, A., Jezequel, M., ElHasssan, B.
\newblock `Joint {SDMA} and power-domain {NOMA} system for multi-user {mm-Wave}
  communications'.
\newblock In: Proc.~ International Wireless Communications and Mobile Computing
  (IWCMC). (IEEE,  2020. pp.~ 1112--1117

\bibitem{Samimi2016}
Samimi, M.K., Rappaport, T.S.: `{3-D} millimeter-wave statistical channel model
  for {5G} wireless system design', \emph{IEEE Trans Microw Theory Tech},
  2016, \textbf{64}, (7), pp.~2207--2225

\bibitem{ali2016dynamic}
Ali, M.S., Tabassum, H., Hossain, E.: `Dynamic user clustering and power
  allocation for uplink and downlink non-orthogonal multiple access {(NOMA)}
  systems', \emph{IEEE access},  2016, \textbf{4}, pp.~6325--6343

\bibitem{hansen2009phased}
Hansen, R.C.: `Phased array antennas'. vol. 213.
\newblock (John Wiley \& Sons,  2009)

\bibitem{Yang2013}
Yang, H., Marzetta, T.L.: `Performance of conjugate and zero-forcing
  beamforming in large-scale antenna systems', \emph{IEEE J Sel Areas Commun},
  2013, \textbf{31}, (2), pp.~172--179

\end{thebibliography}
\end{document}